\documentclass[a4paper,12pt]{article}

\usepackage{cite}
\usepackage{fullpage,setspace,hyperref}

\usepackage{amssymb,amstext,amsmath,amsthm,amsfonts,amsmath,amscd,relsize,url,mathtools, bm}
\usepackage{accents}
\usepackage[utf8]{inputenc}
\onehalfspacing
\usepackage{bbm} 

\usepackage{color}

\allowdisplaybreaks
\numberwithin{equation}{section}

\newcommand{\be}{\begin{equation}}
\newcommand{\ee}{\end{equation}}
\newcommand{\bea}{\begin{eqnarray}}
\newcommand{\eea}{\end{eqnarray}}

\def\pa{\partial}

\newcommand{\p}{\partial}
\newcommand{\pt}{\tilde{\partial}}
\newcommand{\n}{\nabla}
\newcommand{\rd}{\mathrm{d}}

\def\to{\rightarrow}

\def\tl{\tilde}
\def\tx{\tilde{x}}
\def\ty{\tilde{y}}

\def\gld{generalized Lie derivative }
\def\glds{generalized Lie derivatives }

\def\cyc{\sum_{Cycl\,(X,Y,Z)}}

\newcommand{\PS}{\mathcal{P}}

\newcommand{\TT}{\mathbbm{T}}
\newcommand{\Lie}{\mathcal{L}}

\newcommand{\ap}{\alpha}
\newcommand{\bt}{\beta}
\newcommand{\Db}{\mathrm{D}}
\newcommand{\Dbr}{\mathrm{D}}
\newcommand{\pr}{P}

\newcommand{\PP}{P}
\newcommand{\PPt}{\tilde{P}}
\newcommand{\id}{\mathbbm{1}}

\newcommand{\se}{\Gamma}
\newcommand{\bl}{[\![}
\newcommand{\br}{]\!]}

\newcommand{\lc}{\mathring{\n}}

\newcommand{\Lb}{\mathbf{L}}
\newcommand{\im}{\mathrm{Im}\,}
\newcommand{\Ker}{\mathrm{Ker}\,}
\newcommand{\Hh}{\mathcal{H}}
\newcommand{\HH}{\mathcal{H}}

\newcommand{\Lt}{\tl{L}}
\newcommand{\etah}{\hat{\eta}}
\newcommand{\xt}{\tl{x}}
\newcommand{\yt}{\tl{y}}
\newcommand{\zt}{\tl{z}}

\newtheorem{Thm}{Theorem}
\newtheorem{Lem}{Lemma}
\newtheorem{Cor}{Corollary}
\newtheorem{Prop}{Proposition}
\newtheorem{Def}{Definition}

\newtheorem*{Rem}{Remark}


\begin{document}

\begin{titlepage}
\vfill

\begin{flushright}
LMU-ASC 35/17 \\
\end{flushright}

\vfill
\begin{center}
   \baselineskip=16pt
   	{\Large \bf 
       Generalised Kinematics for Double Field Theory}
   	\vskip 2cm
   	{\sc  Laurent Freidel$^*$\footnote{\tt lfreidel@perimeterinstitute.ca}, 
          Felix J. Rudolph$^{\dagger}$\footnote{\tt felix.rudolph@lmu.de}, 
          David Svoboda$^*$\footnote{\tt dsvoboda@perimeterinstitute.ca} }  
	\vskip .6cm
    {\small  \it 
        $*$ Perimeter Institute for Theoretical Physics, \\
          31 Caroline St. N.,  Waterloo ON, N2L 2Y5, Canada \\ \ \\
        $\dagger$ Arnold Sommerfeld Center for Theoretical Physics, Department f\"ur Physik, \\
          Ludwig-Maximilians-Universit\"at M\"unchen, Theresienstr. 37, 80333 M\"unchen, Germany }
	\vskip 2cm
\end{center}

\begin{abstract}
We formulate a  kinematical extension of Double Field Theory  on a $2d$-dimen-sional para-Hermitian manifold $(\PS,\eta,\omega)$ where the $O(d,d)$ metric $\eta$ is supplemented by an almost symplectic two-form $\omega$. Together $\eta$ and $\omega$ define an almost bi-Lagrangian structure $K$ which provides a splitting of the tangent bundle $T\PS=L\oplus\Lt$ into two Lagrangian subspaces. In this paper a canonical connection and a corresponding generalised Lie derivative for the Leibniz algebroid on $T\PS$ are constructed. We find integrability conditions under which the symmetry algebra closes for general $\eta$ and $\omega$, even if they are not flat and constant. This formalism thus provides a generalisation of the kinematical structure of Double Field Theory. We also show that this formalism allows one to reconcile and unify Double Field Theory with Generalised Geometry which is thoroughly discussed.

\end{abstract}

\vfill

\setcounter{footnote}{0}
\end{titlepage}

\tableofcontents

\section{Introduction}
The notion of spacetime geometry used in the context of general relativity is based on four elements $(M,[\cdot,\cdot], g, \nabla)$: a set of points $M$ equipped with a differentiable structure, which is encoded into the Lie bracket on the tangent bundle $[\cdot,\cdot]:\se(\Lambda^2(TM)) \to \se(TM)$; a metric $g\in \se(T^*M\otimes T^*M)$, which gives us a notion of distance between points; and a compatible torsionless connection $\nabla: \se(TM) \to \se(TM\otimes T^*M)$ which defines a notion of parallel transport. The data $(M,[\cdot,\cdot])$ which encodes the underlying structure of spacetime and the symmetry of the theory by infinitesimal diffeomorphisms, will be referred to as the {\it kinematical} component of the geometry, since in the usual formulation of gravity it is given from the onset and is not subject to any dynamics. The data $(g,\nabla)$ will be referred to as the {\it dynamical} component; in the Einsteinian picture, this component is subject to the dynamics encoded into the choice of an action invariant under diffeomorphisms that fixes the torsion to be zero and determines the Ricci component of the curvature. 

By the generalisation of geometry we mean a structure that generalises {\it both}  its  kinematics $(M,[\cdot,\cdot])$ and dynamics $(g,\nabla)$. The purpose of this paper is to focus on the generalisation of the kinematics while a companion paper \cite{toappear} focuses on the generalisation of the dynamics.  
As we will see, one of the main components of  our results is that the kinematics itself when properly generalised is encoded into geometrical elements denoted by $(\eta,\omega)$. 
This opens up  a radically  new  possibility: the kinematical structure of spacetime itself might be subject to a dynamical selection rule. 

The construction presented here is profoundly motivated by the study of effective string geometries. 
It is now well established that the arena of several novel formulations of string theory and quantum gravity is a mathematical space denoted by $\PS$  which is locally given by $\PS \simeq M\times \tilde{M}$ with spacetime denoted by $M$. This is the  doubled space of Double Field Theory (DFT) \cite{Tseytlin:1990nb, Tseytlin:1990va, Siegel:1993xq,Siegel:1993th,Hull:2009mi}, it  has twice the number of dimensions of spacetime  and it is seen as a target space for the string, its coordinates are a combination of spacetime and winding coordinates. This new picture has in particular allowed to deepen our understanding of flux compactifications \cite{Aldazabal:2011nj,Geissbuhler:2011mx,Andriot:2011uh,Andriot:2012an} and it has also quite remarkably led us to an understanding of  the constraints of supergravity purely from geometrical argument and without the need of supersymmetry \cite{Coimbra:2011nw,Hohm:2013vpa,Aldazabal:2013sca}. Recently, the correspondence between the string target and the effective description has been revisited in the context of a duality symmetric formulation of string theory called metastring theory. A key element of this description has been to establish that the doubled space $\PS$ of DFT is in fact a phase space carrying a symplectic structure \cite{Freidel:2013zga,Freidel:2014qna,Freidel:2015pka}. The presence of this structure radically modifies the interpretation of the effective string geometry \cite{Freidel:2015uug}. This has been used to show that at the quantum level the string target can be thought of as a non-commutative manifold \cite{Freidel:2017wst}.

The key geometrical elements entering the construction of DFT \cite{Hohm:2010pp,Hohm:2012mf} are a {\it flat} $O(d,d)$ metric\footnote{An $O(d,d)$ metric is a neutral metric with split signature $(+\cdots+,-\cdots-)$.} denoted $\eta$ and a {\it generalised} metric $\HH$ which provides an $O(d)\times O(d)$ structure on the generalised tangent bundle in the Euclidean case, or $O(1,d-1)\times O(1,d-1)$ in the Lorentzian case. The metric $\eta$ encodes the kinematical structure of DFT while $\HH$ encodes its dynamics. These two geometrical structures satisfy a compatibility condition which follows from the T-duality symmetry and imposes that $\HH$ is an element of the coset space $O(d,d)/O(d)\times O(d)$.  In order to write this relation, let us recall that we can think of $(\eta,\HH)$
as $(2,0)$ tensors  or as maps $T\PS\rightarrow T^*\PS$, in which case we denote them by $(\hat{\eta},\hat{\HH})$.
Defining a {\it chiral structure} $J:=  \hat{\eta}^{-1}\hat{\Hh}\in \mathrm{End}(T\PS)$, the compatibility between $\eta$ and $\Hh$ is expressed as the condition
\be
J^2=+\id, \qquad J^T\hat{\eta}J = \hat{\eta} \, .
\ee

One of the central ingredients of DFT is the idea that the notion of diffeomorphism symmetries acting on the spacetime $M$ is unified with the $B$-field gauge transformations  into a generalisation of diffeomorphisms 
 acting on the doubled space $\PS$. At the infinitesimal level this action is given by  the Dorfman derivative $\Lb_X$. This is a generalisation of the notion of  Lie derivative -- labeled by a vector $X \in \se(T\PS)$ -- which acts on tensors in $\PS$ \cite{Hull:2009zb}. By construction, this generalised Lie derivative preserves the neutral metric: $\Lb_X\eta=0$.  

Unlike the usual Lie derivative, the Dorfman derivative defines an infinitesimal transformation that does not integrate to a group action in general. In order to overcome this limitation, the strategy used in double field theory is to restrict the space of vector fields and tensors on which the generalised Lie derivative acts, and demand that they satisfy the so-called {\it section condition }. In the simplest cases the section condition imposes that $\Box_{\eta} (\Phi)=\Box_{\eta} (\Phi \Phi')=0$ for all fields and their products, where $\Box_{\eta}=\eta^{AB}\pa_A \pa_B$.  When  acting on tensor fields that satisfy the section condition, and for vector fields that also satisfy the section condition, it can be shown that the Dorfman derivative satisfies the Jacobi identity $[\Lb_X,\Lb_Y] \Phi = \Lb_{ \bl X ,Y \br} \Phi$  (where the bracket $\bl\cdot,\cdot\br$ is the Courant bracket obtain by skew-symmetrisation of the Dorfman derivative) and in this case it therefore defines a notion of symmetry. This symmetry can in turn be used to constrain the form of  stringy gravitational actions without the need of supersymmetry \cite{Coimbra:2011nw,Hohm:2013vpa,Aldazabal:2013sca}. One of the main challenges of this approach is to find a consistent mathematical description of the section condition that translates directly into an underlying choice of a geometrical structure.

A separate but related development -- also aiming at describing geometrical entities relevant to the geometry of string theory -- is  an approach that does not have the previous issue; this  is the field of Generalised Geometry (GG). This field originates from  the attempts of Courant, Weinstein \cite{Courant-Weinstein} and Dorfman \cite{Dorfman} to unify the Dirac formalism \cite{Dirac} of symplectic reduction with the formalism of  Poisson reduction by first and second class constraints. This was later also studied by Courant \cite{courant1990dirac}, who introduced a bracket, which is the skew-symmetrisation of the Dorfman Lie derivative. The construction was subsequently generalised by Liu, Weinstein and Xu \cite{Liu-Weinstein}. This led to the construction of the notion of a Leibniz algebroid\footnote{Equivalently known as Courant algebroid.}\cite{Roytenberg}.

In the approach of generalised geometry, the spacetime manifold $M$ is not doubled and instead the tangent bundle over $M$ is generalised. In the simplest case (see \cite{Baraglia, Xu} for more general examples) it is promoted to the double tangent bundle $\mathbb{T}M := TM \oplus T^*M$. The advantage of this approach is that there is no need to impose a section condition since the base manifold is unchanged to begin with. The generalised tangent bundle naturally carries a symmetric pairing $\langle\cdot, \cdot\rangle_+$ which descends from the canonical pairing of forms and vectors
\begin{equation}
\langle (x,\ap), (y,\bt)\rangle_+ : =\ap(y)+\bt(x),\qquad x,y\in \se(TM), \,\, \alpha,\beta \in \se(T^*M) \,.
\end{equation}
This symmetric pairing of GG plays a role analogous to the flat metric $\eta$ of DFT. This bundle $\TT M$ also carries a Dorfman derivative that generalises the Lie derivative and is defined by 
\be
 \mathbb{L}_{(x,\alpha)} (y,\beta) = ( [x,y], {\cal L}_x \beta - i_y \rd \alpha).
\ee
This derivative generates a group of transformations without the need of additional constraints.
This kinematical structure can then be augmented by a generalised metric $\HH$ which is a pairing on $\mathbb{T}M$.  

Despite the many successes of both the DFT and GG approaches, there are still some unsatisfactory aspects that motivate us to go beyond the state of the art. Starting from DFT, it has been noted that locally the section condition forces the fields to depend on only half the coordinates. However, there is not a unique solution to this constraint, i.e. there is no geometrical information that reveals which half of the coordinates the fields depend on. In other words, which submanifold of $\PS$ serves as the base $M$ for the generalised geometry $\TT M$. This freedom of choice makes DFT in a sense undetermined.
 
This would suggest that GG is therefore the proper mathematical expression of string geometry since it does not contain any undeterminations. However, from the point of view of string theory, the main reason to expect a generalisation of geometry is the T-duality symmetry. The stringy expression of T-duality in terms of sigma models suggests that T-duality is a map from $M$ (which appears in the large $R$ limit of compactified strings) onto a dual manifold $\tilde{M}$ (which appears in the small $R$ limit of compactified strings), i.e. a symmetry that exchanges the base of the generalised bundle. The action of changing the base manifold can eventually be implemented  in DFT since $M$ is viewed as a subspace (or a quotient space) of $\PS$, but it cannot be accommodated in GG since there the base is fixed. At best  we can implement an $O(d,d)$ transformation on the fibers $\mathbb{T}M$ in lieu of a T-duality transformation, but this does not express the profound change of geometrical structure inherent to T-duality symmetry\footnote{We now understand that this change requires non-commutative geometry \cite{Freidel:2017wst,Freidel:2017nhg}.}.

To summarise, we see that DFT can accommodate some aspects of T-duality symmetry but is mathematically undetermined, while GG is mathematically sound but does not provide a geometrically satisfying description of T-duality. It is therefore of utmost importance to understand what extra geometrical data enables generalised geometry to be embedded inside double field theory in order to eventually allow a mathematical characterisation of T-duality in the context of GG.

One of the key ingredients in the resolution of this puzzle has been revealed in the study of metastring theory \cite{Freidel:2014qna,Freidel:2015pka} and in the corresponding modular spacetime that represents its effective geometry \cite{Freidel:2015uug}. The metastring theory is a formulation of string theory in which T-duality symmetry is manifest. The central question that has been investigated in these works is what type of generalisation of target space geometry allows to encapsulate 
stringy phenomena. It has been found that the proper geometrical concept generalising the notion of spacetime is the idea of modular spacetime which is fundamentally quantum \cite{Freidel:2015uug,Freidel:2016pls}. At the classical level, this translates into the fact that modular spacetimes can be viewed as a non-trivial line bundle over a {\it Born geometry} \cite{Freidel:2013zga} that constitutes the classical geometrical expression of the quantum string geometry.

The main point -- which is in total agreement with DFT -- is that the string target space is a doubled space. The key element that differentiates the Born geometry from the doubled space of DFT is the fact that this target space should be interpreted at the classical level as a dynamical {\it  phase space} of string probes combining spacetime and energy-momentum space. The presence of a dynamical symplectic form $\omega$ in addition to the DFT geometrical elements $(\eta, \HH)$ completes the picture and allows one to connect the generalised geometrical concepts with concepts in non-commutative geometry \cite{Freidel:2017wst}.

This means that the kinematics of the metastring target space $\PS$ is encoded in the pair $(\eta, \omega)$, where $\eta$ is a neutral metric and $\omega$ is a non-degenerate two-form. These satisfy a compatibility condition which can be stated as follows. The operator $K:= \hat{\eta}^{-1}\omega\in \mathrm{End}( T\PS)$ defines an {\it almost product structure} compatible with $\eta$, i.e.
\be
K^2 = +\id, \qquad K^T \hat{\eta} K = - \hat{\eta}. 
\ee
In other words, the data $(\PS,\eta,\omega)$ is an almost para-Hermitian manifold.  As shown in \cite{Freidel:2017wst} this para-Hermitian structure is essential in order to construct the string vertex operators. This implies that the tangent bundle decomposes as $T\PS = L\oplus \tilde{L}$ where $L$ is a distribution defined to be the eigenspace of $K$ with eigenvalue $+1$ and $\tilde{L}$ is the complementary distribution with eigenvalue $-1$. Both $L$ and $\tilde{L}$ are {\it null}, $\eta|_L=0$ and {\it Lagrangian}, $\omega|_L=0$.

The key difference between the metastring target space and the DFT realm is twofold. First, in Born geometry we do not  assume $\eta$ to be flat\footnote{For a different approach to DFT on curved spaces, namely on group manifolds, see \cite{Blumenhagen:2014gva,Hassler:2016srl}.} and we also do not assume $\omega$ to be closed. Moreover, the compatible two-form $\omega$ and the associated Lagrangians $L$ and $\tilde{L}$ provide exactly the extra geometrical structure needed to implement the section condition geometrically. In other words, promoting $\PS$ to a para-Hermitian manifold is what is needed in order to generalise the kinematical structure of DFT and formalise the section condition. This is what we intend to prove.

Accordingly, the goal of this paper is also twofold: on one hand, we are going to show that we can generalise the notion of the Dorfman derivative even when $\eta$ is curved. The only condition that we will impose is the integrability of the distribution $L$. On the other hand, we are going to show that the extra para-Hermitian structure $\omega$ is exactly what is needed in order to connect the kinematical part of Born geometry to an extension of generalised geometry. In other words, we are going to describe how to unify DFT with GG.
In summary, the key results of this paper are: 
\begin{itemize}
	\item The proposition that para-Hermitian geometry encoded into the pair $(\eta,\omega)$ 	and satisfying a new type of closure and integrability conditions  is the proper geometrical structure that allows  to formulate DFT on a manifold with curved $\eta$ (cf. \cite{Cederwall:2014kxa}). We also include key examples along the way and show  that the usual DFT naturally lives on a flat para-K\"ahler manifold. 
	\item A generalisation of the Dorfman derivative for curved $\eta$ in DFT based on a canonical para-Hermitian connection that generalises the Levi-Civita connection  and a proof that it satisfies the Jacobi identity under our closure and integrability conditions. 
 	\item An extension of the realm of generalised geometry to foliations associated with the para-Hermitian geometry  that allows us to construct an isomorphism between the DFT and GG frameworks. Under this isomorphism the generalised Lie derivatives of DFT and GG can be identified.
\end{itemize}
 
Most of the works on GG and DFT have so far not considered the presence of the extra almost symplectic structure $\omega$ -- which gives the more reduced structure group $GL(d)$ as we will see -- and the key geometrical role it plays\footnote{More general approaches to graded symplectic manifolds in the context of doubled spaces can be found in \cite{Heller:2016abk,Deser:2016qkw}.}. One notable exception is in the work of Hull \cite{Hull:2004in} who formalised the geometry of non-geometric vacua in terms of a flat Born geometry. The other important exception is the work of Vaisman \cite{Vaisman:2012ke,Vaisman:2012px} who was the first to recognise that the relationship between GG and DFT hinges on a para-Hermitian or a para-K\"ahler structure. We would also like to draw attention to the earlier work of Alvarez \cite{Alvarez:2000bh,Alvarez:2000bi} on target space duality in the context of symplectic manifolds. Our work can be viewed as an extension of these pioneering works profoundly inspired by the metastring formulation of string theory \cite{Freidel:2015pka}.

The remainder of this paper is organised as follows: Section 2 provides a concise review of the generalised geometry and double field theory approaches. In Section 3 the almost symplectic form $\omega$ is introduced to give a formulation in terms of almost para-Hermitian or almost para-K\"ahler manifolds. A discussion on integrability is also included and the concepts of $L$-para-Hermitian and $L$-para-K\"ahler manifolds are introduced. In Section 4 the generalisation of the kinematics of the doubled space is presented by constructing a generalisation of the Dorfman derivative and finding suitable connection on the para-Hermitian manifold. This section contains our main theorem which states that our generalisation of the Dorfman derivative satisfies the Jacobi identity on an $L$-para-Hermitian manifold.  In Section 5 the relation of this construction to generalised geometry is made precise by constructing an explicit isomorphism between them and showing that this isomorphism identifies the two notions of generalised Lie derivatives. Lastly, the appendix contains technical details and computations to supplement the main text.

\section{Lightning Reviews of GG and DFT Kinematics}
\label{sec:review}
Before embarking onto an extension of generalised geometry and double field theory  to the case when $\eta$ is curved, we will review the main geometrical elements of both GG and DFT, restricting our summary to the kinematical aspects. We then outline how the two (a priori different) setups can be related in the proposed framework. More details on this interplay formulated in the language of  {\it algebroid} structures will be given in \cite{inprogress}.

\subsection{Review of the Generalised Geometry Setup}
\label{sec:GG}
In generalised geometry the central object of study is the doubled bundle $\mathbb{T}M :=  TM \oplus T^*M$  over a manifold $M$. This bundle is equipped with a symmetric pairing described as follows: given two sections $ X= (x,\alpha)$ and $Y=(y, \beta)$ of $\TT M$, it is given by
 \be
 \eta(X,Y) = \iota_x \beta  + \iota_y\alpha, 
 \ee
where $\iota_x$ denotes the interior product. The doubled bundle $\TT M$ is also equipped with a natural generalisation of the Lie derivative of $TM$, the so-called Dorfman bracket or \gld 
 \be
 \mathbb{L}_{(x,\alpha)}(y,\beta) = ([x,y],{\cal L}_x \beta - \iota_y\rd\alpha),
 \label{eq:genLieGG}
 \ee
 where ${\cal L}_x$ denotes the usual Lie derivative along $x$. There is also a natural projection $\pi: \TT M \rightarrow TM$ given by $\pi(x,\ap)=x$. Using this, we find that the three structures $(\eta,\pi,\mathbb{L})$ interplay in a natural way. First, $\mathbb{L}$ is compatible with $\eta$ through $\pi$
\begin{equation}
\pi(z,\gamma)[\eta((x,\ap),(y,\bt))]=\eta(\mathbb{L}_{(z,\gamma)}(x,\alpha),(y,\bt)) + \eta((x,\alpha),\mathbb{L}_{(z,\gamma)}(y,\bt)) \, .
\end{equation}
Second, $\mathbb{L}$ and $\pi$ satisfy the anchoring property
\begin{equation}
\pi( \mathbb{L}_{(x,\alpha)} {(y,\beta)}) = {\cal L}_{x}y=[\pi(x,\ap),\pi(y,\bt)] \, .
\end{equation}
Third, $\mathbb{L}$ is normalised with respect to $\eta$
\begin{equation}
\eta( \mathbb{L}_{(x,\alpha)} (x,\alpha),(y,\beta)) = \tfrac12 \pi(y,\beta) [\eta((x,\alpha),(x,\alpha))] \, .
\end{equation}
In addition, $\mathbb{L}$ satisfies the Jacobi identity
 \be
 [\mathbb{L}_{(x,\alpha)},\mathbb{L}_{(y,\beta)}]=\mathbb{L}_{\mathbb{L}_{(x,\alpha)}{(y,\beta)}}. 
 \label{eq:GGJac}
 \ee
The four properties above define a so-called {\it Leibniz} (or {\it Courant}) algebroid on $\TT M$.

\subsection{Review of the Double Field Theory Setup}

In double field theory we start with a C-bracket defined on the doubled space $\PS = M \times \tilde{M}$
\begin{equation}
\bl X,Y\br^{\pa} \coloneqq \partial_XY - \partial_YX+\frac{1}{2}(\theta(Y,X)-\theta(X,Y)),
\label{eq:DFTcbracket}
\end{equation}
where we have introduced the vector $\theta^A(X,Y):=\eta^{AB}\eta(X,\pa_B  Y)$\footnote{In the $O(d,d)$ index notation used in DFT, the C-bracket reads
$$(\bl X,Y\br^{\pa})^A =  X^B(\pa_BY)^A-Y^B(\pa_BX)^A + \frac{1}{2}\eta^{AB}\eta_{CD}\left((\pa_BX)^CY^D-(\pa_BY)^CX^D\right).$$ The vector $\theta^A(X,Y)$ captures the difference to the ordinary Lie bracket and is therefore expressible in terms of the Y-tensor of DFT \cite{Berman:2012vc}: $\theta^A(X,Y) = {Y^{AB}}_{CD}X^C\partial_BY^D$.}. The tangent bundle $T\PS$ splits as $T\PS=TM\oplus T\tilde{M}\coloneqq L\oplus \Lt$. 

The  space $\PS$  is equipped with a {\it flat} non-degenerate metric $\eta$ such that the distribution $L$ and $\tilde{L}$ are isotropic with respect to $\eta$. If $(x^\mu,\tilde{x}_\mu)$ are local coordinates of $\PS$, such that $x^\mu$ and $\xt_\mu$ are coordinates on $M$ and $\tilde{M}$, respectively, $\eta$ can be written as 
\be
\eta = \rd x^\mu \otimes \rd \tilde{x}_\mu + \rd \tilde{x}_\mu \otimes \rd x^\mu.
\ee
The transition functions between coordinate patches that preserve these flat coordinates belong to $O(d,d)$. The generalised Lie derivative, whose skew-symmetrization gives the C-bracket, is given by
\be \label{eq:genlieDFT}
\Lb^\pa_XY = \partial_XY - \partial_YX + \theta(Y,X).
\ee
This \gld $\Lb^\pa_XY$ does not, however, satisfy the Jacobi identity. The Jacobiator capturing this failure is defined for any generalised Lie derivative $\Lb$ by (cf. Definition \ref{def:jacobiator} and Appendix \ref{sec:jacobiator})
\begin{equation}
J(X,Y,Z,W) = \eta( [\Lb_X,\Lb_Y] Z, W) - \eta( \Lb_{\Lb_XY} Z, W),
\end{equation}
and an explicit computation shows that for the \gld $\Lb^\p$ in \eqref{eq:genlieDFT} we have
  \bea\label{eq:jacobiatorDFT}
J^\pa(X,Y,Z,W) =  \eta( \theta(Z,X),\theta(W,Y))
- \eta(\theta(Z,Y),\theta(W,X))-
\eta(\theta(Y,X),\theta(W,Z)) \, .
  \eea

The usual way to satisfy the Jacobi identity \cite{Hull:2009mi, Aldazabal:2013sca} in double field theory is to {\it restrict} the vector fields by imposing the {\it section condition}, which implies $\eta^{AB}\p_A\p_B=0$ and hence $\eta_{AB}\theta^A\theta^B=0$ and $J^\p=0$. This construction is not canonical and from a geometrical perspective rather ad-hoc.
 
A more fruitful approach is to impose the restriction on the type of generalised Lie derivative one allows. Let us first adapt our notation by generalising \eqref{eq:genlieDFT} to any metrical connection $\n$ as first developed by Vaisman \cite{Vaisman:2012ke},
\be
\Lb^\n_XY = \n_XY - \n_YX + \theta_\n(Y,X),
\ee
where now $\theta_\n(Y,X)$ is defined such that $\eta(Z,\theta_\n(Y,X)) := \eta(Y,\n_ZX)$. Given two complementary projectors $\PP,\PPt: T\PS \to T\PS $, such that $\im \PP$ and $\im \PPt$ are maximally isotropic with respect to $\eta$, the tangent bundle splits as $T\PS=\im \PP \oplus \im \PPt$. We can then decompose the covariant derivative as $\n = \Dbr + \tilde{\Dbr}$ where $\Dbr_X := \n_{\PP(X)}$ and $\tilde{\Dbr}_X :=\n_{\PPt(X)}$. Accordingly, this gives a decomposition of the generalised Lie derivative in terms of the projected Lie derivatives 
  \be \label{eq:splitgenlie}
  \Lb_X^{\n}Y = \Lb^{\Dbr}_XY + \Lb^{\tilde{\Dbr}}_XY.
  \ee 
Coming back to the DFT setting, where $\n=\p$, one can check that the Jacobiators for the projected \glds $J^\Dbr$ and $J^{\tilde{\Dbr}}$ identically vanish. For $J^\Dbr$, this is a result of the following facts:
\begin{itemize}
\item the curvature of $\pa$ vanishes
\item the map $\PP$ is an anchor of $\Lb^{\Dbr}$, i.e. $ \PP(\Lb^{\Dbr}_XY)= [\PP(X),\PP(Y)]$
\item the vector $\theta_D$ belongs to $\Ker\PP$, i.e. $\PP(\theta_\Dbr)=0$.
\end{itemize}
Similar statements hold for $J^{\tilde{\Dbr}}$.

We can see that the splitting \eqref{eq:splitgenlie} corresponds in this setup to imposing the section condition. However, the fields are still unrestricted, but the projected \glds only ``see'' the dependence on half the coordinates. This approach is more geometrically appealing because of the fact that the fields remain unrestricted and different solutions of the section condition can still be recovered by choosing different splittings\footnote{The standard solution of the section condition where fields depend only on $x^\mu$ or only on $\xt_\mu$ corresponds to $\PP$ and $\PPt$ projecting to $TM$ and $T^*M$, respectively.} of $T\PS$.

\subsection{Relationship between GG and DFT}
We have seen above that while in generalized geometry one extends the fibre directions to get $\TT M$, in DFT the base is extended to give $\PS= M \times \tilde{M}$. Even though the bundles of interest, namely $\TT M$ and $T\PS$, have the same rank, the dimensions of the underlying manifolds do not match. To relate the two setups, one needs to find a way to ``reduce'' the base $\PS$ to get a half-dimensional base $M$. Intuitively, this could be seen as one of the reasons why the section condition of DFT needs to be imposed, but as we have explained in the previous discussion, this has to be spelled out more carefully.

Let us now consider the following construction. Starting with $T\PS$, the tangent bundle of a $2d$-dimensional manifold, we introduce a splitting into two rank $d$ distributions $L$ and $\Lt$ such that $T\PS=L\oplus\Lt$. If one of the distributions is integrable (say, $L$, without loss of generality), we get a foliation ${\cal F}$ of $\PS$, such that $T{\cal F}=L$ is the integrable subdistribution. $\cal F$ is a d-dimensional manifold which is bijectively identified with $\PS$ \cite{Milnor, Molino, FoliationConnection}. It  is given by the union of leaves ${\cal F} =\coprod_{[p]} M_{[p]}$ .  We can now view  $T\PS$ as a rank $2d$ bundle over the  $d$-dimensional foliation manifold $\cal F$, which provides us with a map $T\PS \to T{\cal F}$. 
Working in local coordinates, where $x^\mu$ are coordinates along the leaves $M_i$ and $\xt_\mu$ the {\it transverse} coordinates (i.e. coordinates that are constant along $M_i$), $T\PS$ is spanned by $\p_\mu=\frac{\p}{\p x^\mu}$ and $\pt^\mu=\frac{\p}{\p \xt_\mu}$ while $T{\cal F}$ is spanned by $\pa_\mu$. We can then construct an isomorphism between  $T\PS$ and $T {\cal F}$, which can be  expressed in local coordinates as
\begin{equation}\label{eq:local_iso}
\pt^\mu \leftrightarrow \rd x^\mu.
\end{equation}

In the usual DFT setting, $\PS$ is a product of affine manifolds. That is $P=M\times\tilde{M}$ where $M$ and $\tilde{M}$ are vector spaces or flat tori. 
The situation is considerably simplified since the coordinates of $\PS$ split globally into $(x^\mu,\xt_\mu)$. However, in the general setting of the Born geometry, we have to consider issues stemming from global considerations. For example, $\PS$ is not globally a product manifold $M\times \tilde{M}$, and the splitting is only done on the level of the tangent bundle $T\PS=L\oplus \Lt$. To recover the distinguished local coordinates $(x^\mu,\xt_\mu)$, we therefore have to impose the integrability of $L$ as described in the previous paragraph. It turns out, as first observed by Vaisman \cite{Vaisman:2012px}, that a natural framework for these considerations is given by {\it para-Hermitian geometry}, which is the approach we describe in the following sections.

\section{Para-Hermitian, para-K\"ahler and Integrability}
It is now clear that to accomodate the string geometry that follows from the metastring \cite{Freidel:2014qna,Freidel:2015pka} and to reunite the central ideas behind DFT and GG, we need to
include in our geometrical framework -- in addition to the $O(d,d)$ metric $\eta$ of DFT -- a two-form $\omega$. This two-form can be thought of as giving a preferred splitting of $T\PS$ and is a central object that allows one to pass from the local constructions of DFT to a globally well-defined geometry, even when $\eta$ is not a flat metric. The triple $(\PS,\eta,\omega)$ forms a {\it para-Hermitian} structure and the goal of this section is to review and discuss its relevant properties.

As we will see the simplest generalisation that allows for a curved $\eta$, is to demand that $\omega$ is closed and that its Lagrangian submanifolds are integrable. This corresponds to a para-K\"ahler geometry. In this case the manifold $\PS$ is locally split  $\PS\simeq M\times \tilde{M}$ and the leaves $M$ and $\tilde{M}$ are  affine manifolds \cite{Weinsteinsymp}. The curvature of $\eta$ shows up in the non-constancy of the pairing between $TM$ and $T\tilde{M}$. This is a natural generalisation of the DFT geometry setting.
If we want to go beyond this case, we have to relax the condition that $\omega$ is symplectic or that the distributions are integrable. The interplay between the almost symplectic geometry of $\omega$ and the  $O(d,d)$ geometry of $\eta$ is an important theme of our work.

We refer the reader to the review paper \cite{Cruceanu} for some standard results on para-Hermitian manifolds and to the article \cite{Etayo:2004lja} for the relationship with bi-Lagrangian structures. We also acknowledge that it was first noted by Vaisman \cite{Vaisman:2012px, Vaisman:2012ke} that a natural
framework for the geometrical understanding of DFT geometry is given by para-Hermitian
geometry. Our work and results can be viewed as a continuation of his work, also see \cite{inprogress}. 
Let's start by the definition of a para-Hermitian geometry:

\begin{Def}
An almost  para-Hermitian manifold $(\PS,\eta,K)$ is a manifold $\PS$ equipped with a neutral metric $\eta$ and an endomorphism $K \in \mathrm{End}(T\PS)$ such that 
\begin{equation}
K^2 =+\id, \qquad K^T \eta K =- \eta. 
\end{equation} 
\end{Def}
\noindent Since $K$ is a real (or para-complex) structure, it has eigenvalues $\pm 1$ and the tangent bundle of $\PS$ splits as $T\PS=L\oplus \Lt$, where $L$ and $\Lt$ are the $+1$ and $-1$ eigenbundles of $K$. 
The second condition implies that the distribution $L$ is maximally isotropic 
with respect to $\eta$. In other words 
\be
\eta(X,Y)=0,\qquad \forall\, X,Y\in L. 
\ee
Similarly, $\Lt$ is also maximally isotropic. Since $\eta$ is neutral, $L$ and $\Lt$ have the same rank $d=\frac{1}{2}\mathrm{Dim}\, \PS$.  The axioms further imply that
\be\label{omega}
\omega:= \eta K \quad\mathrm{or}\quad \omega(X,Y)= \eta(K(X),Y)
\ee
is an almost symplectic structure, i.e. a non-degenerate two-form. We can also define natural projections onto $L$ and $\Lt$ as 
\be
P := \frac12 (\id + K) \quad\mathrm{and}\quad \tilde{P}:=\frac12 (\id - K). 
\ee
Now the compatibility between $\eta$ and $\omega$ also imposes that $L$ and $ \Lt$ are {\it Lagrangian}\footnote{Even if $\omega$ is not closed we can still defined a Lagrangian subspace of $\omega$ as a space $L$ such that $L^\perp=L$.}  with respect to $\omega$.
We see that an almost para-Hermitian structure could have also been equivalently defined as an almost symplectic manifold $(\PS,\omega)$ with a compatible real structure $K$. When no confusion is possible, we will sometimes refer to $(\PS,\eta,K)$ as $(\PS,\eta,\omega)$, where $\omega=\eta K$.

More conceptually, an almost para-Hermitian structure on a manifold is a reduction of the structure group of the tangent bundle $T \PS$ from GL$(2d, \mathbb{R})$ to 
 \be 
 {\rm{O}}(d,d,\mathbb{R}) \cap {\rm{Sp}}(2d,\mathbb{R})= {\rm{GL}}(d,\mathbb{R}).
 \ee 
The compatibility between $\eta$ and $\omega$ is central for para-Hermitian geometry. In order to understand this geometry one needs to understand the properties of the connections  compatible with both structures. We review these properties here, details and proofs of the given statements can be found in Appendix \ref{sec:connections_apndx}. 
A general discussion and classification of compatible connections and classifications can be found in \cite{Ivanov:2003ze}, also see \cite{Berman:2013uda} for connections in the context of $O(d,d)$ geometry.

Given an almost para-Hermitian manifold, we can define compatible connections:
\begin{Def} A para-Hermitian connection on an almost  para-Hermitian manifold \\ $(\PS,\eta,K)$ is a connection $\n$ which preserves both $\eta$ and $\omega$, i.e. 
\be
\n\eta =\n\omega=0. 
\ee
Alternatively, a para-Hermitian connection preserves $\eta$ and $K$. 
The Levi-Civita connection $\lc$ is the unique connection which preserves $\eta$ and is torsionless. 
\end{Def}
Since we have a metric, the Levi-Civita connection is a natural connection to consider. However,
in general the Levi-Civita is not compatible with $\omega$ and therefore is not a para-Hermitian connection. Demanding that $\lc$ preserves $\omega$ imposes that  $\omega$ is closed. Therefore, a necessary (but not sufficient, see Corollary \ref{para-Kahler}) condition for the Levi-Civita connection $\lc$ to be para-Hermitian is that $\omega$ is symplectic.  This follows from the following lemma:
\begin{Lem}\label{firstlemma}
Let $(\PS,\eta,K)$ be an almost para-Hermitian manifold with compatible two-form $\omega$. The Levi-Civita derivative of $\omega$ satisfies the following properties
\begin{align}
\lc_X\omega(P(Y),\tilde{P}(Z)) &= 0, \label{eq:omegaPtP}\\
\lc_X\omega(Y,Z)&= \eta((\lc_XK) Y , Z), \label{omegaK}\\
\rd \omega(X,Y,Z) &= \sum_{(X,Y,Z)} \lc_X\omega(Y,Z) \label{rdom}
\end{align}
where $\sum_{(X,Y,Z)}$ is the sum over all cyclic permutations of $(X,Y,Z)$.
\end{Lem}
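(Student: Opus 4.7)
} The strategy is to establish identity \eqref{omegaK} first, then deduce \eqref{eq:omegaPtP} as a direct corollary, and finally note that \eqref{rdom} is the standard torsion-free rewriting of $d\omega$.

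For \eqref{omegaK}, the starting point is the definition $\omega(Y,Z)=\eta(KY,Z)$ together with the Leibniz rule for the covariant derivative of a $(0,2)$-tensor:
\begin{equation*}
(\lc_X\omega)(Y,Z)=X[\omega(Y,Z)]-\omega(\lc_XY,Z)-\omega(Y,\lc_XZ).
\end{equation*}
Using $\lc\eta=0$ to rewrite $X[\eta(KY,Z)]=\eta(\lc_X(KY),Z)+\eta(KY,\lc_XZ)$ and splitting $\lc_X(KY)=(\lc_XK)Y+K(\lc_XY)$, the two terms containing $\lc_XY$ and $\lc_XZ$ reassemble into $\omega(\lc_XY,Z)+\omega(Y,\lc_XZ)$ and cancel the last two terms above, leaving precisely $\eta((\lc_XK)Y,Z)$.

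For \eqref{eq:omegaPtP}, the key algebraic input is obtained by differentiating the constraint $K^2=\id$: one gets $(\lc_XK)K+K(\lc_XK)=0$, i.e.\ $\lc_XK$ anticommutes with $K$. Consequently $\lc_XK$ swaps the $\pm1$ eigenbundles of $K$, so $(\lc_XK)P(Y)\in\Lt$. Applying \eqref{omegaK},
\begin{equation*}
(\lc_X\omega)(P(Y),\tilde{P}(Z))=\eta\bigl((\lc_XK)P(Y),\tilde{P}(Z)\bigr),
\end{equation*}
and since both arguments now lie in $\Lt$, which is isotropic for $\eta$, this vanishes.

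For \eqref{rdom}, I would appeal to the well-known identity expressing the exterior derivative of a differential form in terms of an arbitrary torsion-free connection; for a two-form and the torsion-free Levi-Civita connection $\lc$ this reads $d\omega(X,Y,Z)=(\lc_X\omega)(Y,Z)-(\lc_Y\omega)(X,Z)+(\lc_Z\omega)(X,Y)$, which after using antisymmetry of $\omega$ is exactly the cyclic sum in \eqref{rdom}.

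The only subtlety in this plan is getting the signs right when differentiating $K^2=\id$ and verifying that $\lc_XK$ genuinely interchanges the two eigendistributions; everything else is routine tensor calculus, and no integrability or closedness assumption on $\omega$ is needed.
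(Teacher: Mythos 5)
Your proposal is correct and takes essentially the same approach as the paper: the identity \eqref{omegaK} comes from the same Leibniz-rule/metricity/compatibility computation, and \eqref{rdom} is the standard torsion-free relation between $\rd\omega$ and $\lc\omega$ that the paper also invokes. The only (harmless) difference is one of ordering and of the micro-argument for \eqref{eq:omegaPtP}: you deduce it from \eqref{omegaK} by noting that differentiating $K^2=\id$ makes $\lc_XK$ anticommute with $K$, so $(\lc_XK)P(Y)\in\Lt$ and isotropy of $\Lt$ finishes it, whereas the paper reads it off directly by inserting eigenvectors $K(Y)=Y$, $K(Z)=-Z$ into the intermediate expression $\eta(\lc_X(K(Y)),Z)+\eta(\lc_XY,K(Z))$ before extracting \eqref{omegaK}; both routes are valid and use the same ingredients.
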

The first property emphasises that unlike $\omega$, $\lc\omega$ is of diagonal support with respect to the decomposition $T\PS= L\oplus \Lt$. The second property relates the derivatives of $\omega$ and $K$. The third one is standard, it simply relates the differential and covariant derivative when the connection is torsionless. This lemma is proven in appendix \ref{sec:connections_apndx}.

It is a well-known fact that any metrical connection $\n$ of the metric $\eta$ can be split in terms of the Levi-Civita connection $\lc$ and a so-called {\it contorsion tensor} $\hat{\Omega}$ which is a $(2,1)$ tensor. By definition it is given by
\begin{equation}
\eta(\n_XY,Z)=\eta(\lc_XY,Z)+\eta(X,\hat\Omega(Y,Z)).\label{eq:contorsion}
\end{equation}
We also denote by $\Omega$ the corresponding $(3,0)$ tensor, $\Omega(X,Y,Z):= \eta(X,\hat\Omega(Y,Z))$.
The relationship between a para-Hermitian connection and the Levi-Civita connection of $\eta$ can then be expressed purely in terms of this tensor:
\begin{Lem}\label{lemma:levicivita_cont}
Any para-Hermitian connection $\n$ has a skew contorsion tensor, ${\Omega}(X,Y,Z)=-{\Omega}(X,Z,Y)$, whose diagonal components are  determined by  the Levi-Civita derivative of $\omega$ as follows
\bea \label{omeq}
\lc_X \omega(P(Y),P(Z))&=& 2\Omega(X,P(Y),P(Z)), \cr
\lc_X \omega(\tilde{P}(Y),\tilde{P}(Z))&=& -2\Omega(X,\tilde{P}(Y),\tilde{P}(Z)).
\eea 
 \end{Lem}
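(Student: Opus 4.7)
Both claims will come from analysing the difference tensor $C(X,Y) := \n_X Y - \lc_X Y$, which the contorsion definition \eqref{eq:contorsion} identifies with $\Omega$ through $\eta(C(X,Y),Z) = \Omega(X,Y,Z)$. The skew-symmetry of $\Omega$ in its last two slots follows from metric compatibility alone: both $\n$ (para-Hermitian, hence metric) and $\lc$ preserve $\eta$, so subtracting the two compatibility identities at $(X,Y,Z)$ yields $\eta(C(X,Y),Z) + \eta(Y,C(X,Z)) = 0$, which transcribes directly to $\Omega(X,Y,Z) + \Omega(X,Z,Y) = 0$.

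For the diagonal identities I would exploit $\n\omega = 0$ to reduce $\lc\omega$ to a purely algebraic expression in $C$. Comparing the expansions of $(\n_X\omega)(Y,Z)$ and $(\lc_X\omega)(Y,Z)$ at the same point gives
\begin{equation*}
(\lc_X\omega)(Y,Z) \;=\; \omega(C(X,Y),Z) + \omega(Y, C(X,Z)).
\end{equation*}
Using $\omega(V,W) = \eta(K(V),W)$ together with the $\eta$-antisymmetry of $K$ (which is forced by $\omega$ being a two-form), each $\omega$-pairing of $C$ is then rewritten as a value of $\Omega$ in which $K$ has been moved onto the third argument, producing an expression of the schematic form $\Omega(X,\cdot,K(\cdot))$. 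Substituting $Y = P(Y)$ and $Z = P(Z)$ makes $K$ act as the identity on those projected arguments, and the skew-symmetry from the first step collapses the two contributions into a single multiple of $\Omega(X,P(Y),P(Z))$ with coefficient $2$. The analogous substitution with $\tilde{P}$, where $K|_{\tilde{L}} = -\id$, flips the overall sign and produces the second identity.

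An equivalent route is to start from Lemma~\ref{firstlemma}, which gives $(\lc_X\omega)(Y,Z) = \eta((\lc_X K)Y, Z)$, combine it with the identity $\lc_X K = K\circ C(X,\cdot) - C(X, K\cdot)$ that follows from $\n K = 0$, and again restrict to eigenvectors of $K$. Either way the underlying algebra is short and essentially forced by three inputs: $\n\omega = 0$, metric compatibility of $\lc$, and the eigenvalue split $K|_L = \id$, $K|_{\tilde L} = -\id$. The main obstacle I would expect is purely clerical, namely keeping the sign conventions among $K$, $\omega$, $\eta$ and the contorsion coherent so that the two displayed formulas emerge with exactly the signs stated.
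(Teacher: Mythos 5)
Your route is essentially the paper's own proof: the skew-symmetry of $\Omega$ comes from subtracting the two metric-compatibility identities for $\n$ and $\lc$, and the diagonal formulas come from comparing $\n\omega=0$ with $\lc\omega$ through the difference tensor and then restricting to the $\pm1$ eigenbundles of $K$ (the appendix does exactly this, deriving an intermediate identity of the form $\lc_X\omega(Y,Z)=\pm\big[\Omega(X,Y,K(Z))-\Omega(X,Z,K(Y))\big]$ and then inserting $P$, $\PPt$); your alternative route via $\lc_XK$ is the same circle of identities. One caution on the point you yourself flag as the main risk: if you push your identity $(\lc_X\omega)(Y,Z)=\omega(C(X,Y),Z)+\omega(Y,C(X,Z))$ through with the paper's conventions ($\omega=\eta K$, $\eta(K\cdot,\cdot)=-\eta(\cdot,K\cdot)$, and \eqref{eq:contorsion}, i.e.\ $\Omega(X,Y,Z)=\eta(\n_XY-\lc_XY,Z)$), you obtain $\lc_X\omega(Y,Z)=\Omega(X,Z,K(Y))-\Omega(X,Y,K(Z))$, hence $\lc_X\omega(P(Y),P(Z))=-2\,\Omega(X,P(Y),P(Z))$ and $\lc_X\omega(\PPt(Y),\PPt(Z))=+2\,\Omega(X,\PPt(Y),\PPt(Z))$ --- the opposite of the displayed signs in \eqref{omeq}. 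A quick independent check confirms this: for a para-Hermitian $\n$ one has $\n_X\Gamma(L)\subset\Gamma(L)$, so $\eta(\n_Xy,z)=0$ for $y,z\in\Gamma(L)$ and thus $\Omega(X,y,z)=-\eta(\lc_Xy,z)$, while \eqref{omegaK} gives $\lc_X\omega(y,z)=+2\eta(\lc_Xy,z)$; the corrected signs are also the ones consistent with the canonical-connection formula \eqref{contc}, which says the contorsion of $\n^c$ is $-\tfrac12\lc_X\omega(Y,K(Z))$. So do not contort your bookkeeping to reproduce \eqref{omeq} verbatim: the flip traces to the subtraction step in the appendix proof, not to any gap in your argument, and it does not affect what the lemma is used for, namely that the diagonal components of the contorsion are completely determined by $\lc\omega$.
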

These properties imply that only  the off-diagonal components of the contorsion tensor   are not determined by $\omega$.

\subsection{Integrability, closure  and a key example}

On the surface,  the notion of an almost para-Hermitian manifold is very similar to the notion of an almost Hermitian manifold $(\tilde{\PS}, \HH, I)$, where $\HH$ is an Euclidean metric and $I$ is an almost complex structure with compatibility $I^T\HH I=\HH$, $I^2=-1$ and with structure group $U(d)$. In the almost Hermitian case, there are two levels of of integrability one usually studies: if the complex structure is integrable, the manifold is said to be a Hermitian manifold; if the two-form $\omega$ is closed, the manifold is said to be almost K\"ahler. When both conditions are realised the manifold is said to be K\"ahler \cite{Moroianu}.

Therefore, the same characterisation of almost para-Hermitian manifolds can be implemented. Integrability of the real structure $K$ is encoded into the real analogue of the Nijenhuis tensor
\be \label{eq:nijenhuis}
4 N_K (X,Y) :=   [K(X),K(Y)] +[X,Y] -K([K(X),Y]+[X,K(Y)]).
\ee
When this tensor vanishes, the manifold is said to be para-Hermitian. Moreover, if the two-form $\omega$ is closed, the manifold is said to be almost para-K\"ahler. When both conditions are realised the manifold is said to be para-K\"ahler. These different situations have been studied extensively, see \cite{para-Kahler, Ivanov:2003ze} for definitions and examples. When the manifold is para-Hermitian it is locally split, i.e. it admits a set of local coordinates $(x,\tilde x) $ for which the product structure reads $K(\partial_x)=\pa_x$ and $K({\pa}_{\tilde{x}} ) = - \pa_{\tilde{x}}$.

Despite the similarities, the geometry of the para-Hermitian structure is fundamentally different from its complex analogue. There are two reasons behind this: in the complex case the two distributions  that diagonalise $I$ and decompose as $T\PS \otimes \mathbb{C}$ are conjugate to each other, so one is integrable if and only if the other one is. In the para-Hermitian case the integrability of the two distributions $L$ and $\tilde{L}$ are independent from one another, i.e. one Lagrangian distribution can be integrable while the other is not. This can be best seen by rewriting the Nijenhuis tensor \eqref{eq:nijenhuis} as a sum associated with each Lagrangian 
$N_K(X,Y)=N_{\PP}(X,Y)+N_{\PPt}(X,Y)$ where
\begin{equation}\label{eq:nijenhuis_split}
N_{\PP}(X,Y)\coloneqq \PPt[\PP(X),\PP(Y)],\quad N_{\PPt}(X,Y)\coloneqq \PP[\PPt(X),\PPt(Y)] .
\end{equation}
Now $N_{\PP}\in \Lt $ governs the integrability of $L$, while $N_{\PPt}\in L$ the integrability of $\Lt$ and we see that one can vanish even if the other one does not.
In order to account for the possibility of ``half-integrability'', we introduce some new terminology.
\begin{Def}
An almost para-Hermitian manifold $(\PS,\eta,\omega)$ is said to be  $L$-para-Hermi-tian if $L$ is integrable. Also we say that an almost para-Hermitian manifold is almost $L$-para-K\"ahler if the conditions 
\be\label{LparaKahler}
 \rd \omega(\PP(X),\PP(Y),\PP(Z)) =0,\qquad \rd \omega(\PP(X),\PPt(Y),\PPt(Z)) =0,
\ee
are satisfied. If a manifold is $L$-para-Hermitian and $L$-para-K\"ahler then it is said to be almost  $L$-para-K\"ahler.
Similarly we can define the notion of $\tilde{L}$-para-Hermitian, almost $\Lt$-para-K\"ahler and  $L$-para-K\"ahler, by echanging the role of $L$ and $\Lt$.
\end{Def}
From this definition it is clear that if a manifold is both $L$ and $\Lt$-para-hermitian then it is simply para-hermitian. When in addition $L$ is integrable, the manifold is said to be $L$-para-K\"ahler (or $\Lt$-para-K\"ahler if $\Lt$ is integrable).
or $\tilde{L}$-para-Hermitian if $\tilde{L}$ is integrable. When both $L$ and $\tilde{L}$ are integrable the manifold is  para-Hermitian and similarly for the para-K\"ahler and almost para-K\"ahler cases. 
The fact that an $(L,\Lt)$-para-K\"ahler manifold is para-K\"ahler is needed in the next section where we study the relationship between the integrability conditions  and properties of $\rd \omega$. Before doing so, we present a key example.

\paragraph{Example:} A canonical example (see also \cite{Vaisman:2012px,vilcu2011hyperhermitian}) that illustrates the usefulness of the notion of ${\Lt}$-integrability is when $\PS = TM$ is the total space of the tangent bundle of a $d$-dimensional manifold $(M, g, D)$ equipped with a metric $g$ and a compatible connection $D$. We denote by $\pi: \PS \to M$ the canonical projection, by $\rd \pi: T\PS \to TM$ its differential and by $\tilde{L}=\mathrm{Ker}(\rd \pi)$ the vertical bundle. {In physical terms the vertical bundle is the space of velocities and it is canonically identified with the tangent bundle since  the vertical fiber is $\pi^{-1}(x)=T_xM$. }  The affine connection $D$ can be viewed as an Ehresmann connection, i.e. a decomposition $T\PS= L\oplus \tilde{L}$ of the tangent bundle in terms of a sum of a horizontal\footnote{ The affine connection $D$ associated with the decompostion $T\PS= L\oplus \tilde{L}$ is   defined by $ {P}( \rd Z(X))= D_X Z$ where $X,Z\in TM$.  Here  $Z\in TM $ is viewed as a map $Z:M\to \PS$ and $\rd Z:TM\to T\PS$ denotes its differential. $D$ can be viewed as a map $D: T\PS \to TM$ whose kernel defines the horizontal bundle.} bundle ${L}$ and a vertical bundle ${\Lt}$ (see \cite{ehresmann1948connexions,  Gudmundssson}).   
 
Given a point $x \in M$ and a point $p=(x,v)\in TM$ with $v\in T_xM$, we can define the notion of horizontal and vertical lift $\mathsf{h}:TM \to {L}$ and $\mathsf{v}: TM \to \tilde{L}$. These are defined as follows:
\begin{Def}\label{def:lifts}
The horizontal lift of $X\in T_xM$ to  $p=(x,v) \in TM$ is the unique vector $\mathsf{h}(X) \in {L}_p$ such that $\rd\pi(\mathsf{h}(X)) = X$. The vertical lift $\mathsf{v}(X)$
of $X\in T_xM$ to  $p=(x,v) \in TM$ is the unique vector $\mathsf{v}(X) \in \tilde{L}_p$ such that
$\mathsf{v}(X)[df]= X(f)$ for $f\in C^\infty(M)$. Here $\rd f$ is the linear function $\rd f (x,v)= v(f)$.
\end{Def}
We can then promote the  point-wise construction of the previous definition to any vector field $X\in \se(TM)$ to obtain its horizontal and vertical lifts $\mathsf{h}(X)\in \se(L),\ \mathsf{v}(X)\in \se(\Lt)$. These vector fields can then be equipped with a $\PS$-Lie bracket (i.e. a Lie bracket on $T\PS=TTM$), satisfying the relations \cite{Sasaki}
\bea
[\mathsf{v}(X), \mathsf{v}(Y)]_{\PS}=0, \quad [\mathsf{h}(X), \mathsf{v}(Y)]_{\PS}= \mathsf{v}(D_XY),  \cr
[\mathsf{h}(X),\mathsf{h}(Y)]_{\PS} = \mathsf{h}([X,Y]) -  \mathsf{v}[R_D(X,Y)v],
\eea
where $R_D(X,Y)v= [D_X,D_Y]v-D_{[X,Y]}v$ is the curvature of $D$. We see that the set of vertical vectors $\tilde{L}$ forms an ideal of $T\PS$.
We also see that the set of horizontal vectors is integrable if and only if the curvature of $D$ vanishes. 

We will now also show that the lifting maps $\mathsf{h},\mathsf{v}$ given in Definition \ref{def:lifts} are well defined. For this purpose, we choose local coordinates $x^a$ on $M$ and  corresponding local coordinates $(x,v)$ on $TM$, where $v=v^a \pa_a$ and we denote $\pa_a :=\frac{\pa}{\pa x^a}$ and $\tilde{\pa}_a :=\frac{\pa}{\pa v^a}$. The projection map in these coordinates is given by $\pi(x,v)=x$.  An element $\mathbb{X}$ of $\Gamma(T\PS)$ can be decomposed as $\mathbb{X}(x,v)= X^a(x,v)\pa_a + V^b(x,v)\tilde{\pa}_b$. The vertical and horizontal maps associated with $X= X^a(x)\pa_a\in \Gamma(TM) $ are then explicitly given by 
\be
\mathsf{v}(X)(x,v) = X^a {\tilde\pa_a}, \quad
\mathsf{h}(X)(x,v) =  X^a (\pa_a - \Gamma_{a\bm{v}}^c  \tilde{\pa_c})
\ee
where $\Gamma_{ab}^c$ are the connection symbols: $D_{\pa_a} \pa_b = \Gamma_{ab}^c \pa_c$, and we denoted $\Gamma_{a\bm{v}}^c:= \Gamma_{ab}^c v^b$ for simplicity. In order to check that $\mathsf{h}(\pa_a)$ is horizontal we recall that a vector $p(t)=(x(t),v(t))\in TM$ is transported parallelly along a path $x(t)\in M$ 
if $D_{\dot{x}} v=0$. This translates into the condition $\dot{v}^c=- \dot{x}^a \Gamma_{a \bm v}^c$, where the dot expresses the time derivative.  Therefore, the tangent components  to a parallel transport are in the image of $\mathsf{h}$: $\dot{p}(t)=(\dot{x},\dot{v}) = \mathsf{h}(\dot{x}^a\pa_a)$. The result then follows by direct computation.

We have seen that to curve the geometry of $\PS$, we need to consider half para-Hermitian manifolds. The above example is an instance of such a geometry:
\begin{Lem}
$(TM, g,D)$ is canonically equipped  with an $\tilde{L}$-para-Hermitian structure \\ $(TM,\eta,K)$
given by 
\be
\eta_p(\mathsf{v}(\pa_a), \mathsf{h}(\pa_b)) = g_{ab}(x), \quad K(\mathsf{h}(\pa_a)) = +\mathsf{h}(\pa_a),\quad
K(\mathsf{v}(\pa_a)) = -\mathsf{v}(\pa_a).
\ee
$(TM,\eta,K)$ is para-Hermitian if and only if the curvature of $D$ vanishes. It is $\Lt$-para-K\"ahler if  $D$ is torsionless. It is almost para-K\"ahler if and only if $D$ is torsionless.
\end{Lem}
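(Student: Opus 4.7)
The plan is to reduce each assertion to a direct computation using the three Lie bracket relations on $T\PS$ spelled out just before the lemma, together with Cartan's formula for $\rd \omega$. First I will verify that $(TM,\eta,K)$ is genuinely an almost para-Hermitian structure: by construction $L = \mathrm{im}(\mathsf{h})$ and $\tilde L = \mathrm{im}(\mathsf v)$ are the $\pm 1$ eigenspaces of $K$ so $K^2 = \id$ holds, and the pairing $\eta$ couples only horizontal with vertical lifts, whence $\eta|_L = \eta|_{\tilde L} = 0$ and $K^T\hat\eta K = -\hat\eta$ follows. Then I will observe that $\tilde L$ is always integrable: since $[\mathsf{v}(X),\mathsf{v}(Y)]_\PS = 0 \in \tilde L$, Frobenius applies, giving the $\tilde L$-para-Hermitian structure unconditionally.

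Next, for the para-Hermitian statement I only need integrability of $L$. From $[\mathsf{h}(X),\mathsf{h}(Y)]_\PS = \mathsf h([X,Y]) - \mathsf v[R_D(X,Y)v]$, the bracket closes in $L$ if and only if the vertical component vanishes for every $v$, which is equivalent to $R_D = 0$. Combined with the automatic integrability of $\tilde L$, this gives the equivalence with the para-Hermitian property.

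For the remaining two claims I will compute $\rd\omega(\Xb,\Yb,\Zb)$ on the four types of triples of lifted sections via Cartan's formula
\begin{equation}
\rd\omega(\Xb,\Yb,\Zb) = \sum_{(\Xb,\Yb,\Zb)}\Xb\,\omega(\Yb,\Zb) - \sum_{(\Xb,\Yb,\Zb)}\omega([\Xb,\Yb]_\PS,\Zb),
\end{equation}
using that $\omega(\mathsf h(X),\mathsf v(Y)) = g(X,Y)$ (a function of $x$ only) and $\omega$ vanishes on pure horizontal or pure vertical pairs. The triples $(\mathsf v,\mathsf v,\mathsf v)$ and $(\mathsf h,\mathsf v,\mathsf v)$ vanish identically because $[\mathsf v,\mathsf v]_\PS=0$ and vertical derivatives of $x$-dependent functions are zero. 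A straightforward manipulation of $(\mathsf h(X),\mathsf h(Y),\mathsf v(Z))$, using $[\mathsf h(X),\mathsf v(Y)]_\PS = \mathsf v(D_XY)$ and then substituting $X[g(Y,Z)] = g(D_XY,Z) + g(Y,D_XZ)$ from $Dg=0$, yields the key identity
\begin{equation}
\rd\omega(\mathsf h(X),\mathsf h(Y),\mathsf v(Z)) = g(T_D(X,Y),Z),
\end{equation}
where $T_D$ is the torsion of $D$. Finally, the $(\mathsf h,\mathsf h,\mathsf h)$ piece produces the cyclic sum $-\sum_{(X,Y,Z)} g(Z,R_D(X,Y)v)$; using antisymmetry of $R_D$ in its last two arguments (from $Dg=0$) this rewrites as $g(v,\sum_{(X,Y,Z)} R_D(X,Y)Z)$, which vanishes by the first Bianchi identity whenever $T_D = 0$.

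Assembling these computations concludes the proof: the $\tilde L$-para-K\"ahler conditions are $\rd\omega(\mathsf v,\mathsf v,\mathsf v)=0$ (automatic) and $\rd\omega(\mathsf v,\mathsf h,\mathsf h)=g(T_D,\cdot)=0$, so $T_D=0$ suffices; and the almost para-K\"ahler condition $\rd\omega=0$ is equivalent to the vanishing of the only non-automatic component $g(T_D(X,Y),Z)$, which is $T_D=0$, the $(\mathsf h,\mathsf h,\mathsf h)$ component being taken care of by Bianchi once torsion is gone. The main bookkeeping obstacle is the $(\mathsf h,\mathsf h,\mathsf v)$ computation where one must carefully cancel the $X[g(Y,Z)]$ terms against the connection terms arising from the Lie brackets; the subsequent Bianchi step is the conceptual crux but is essentially automatic once $T_D=0$ is in hand.
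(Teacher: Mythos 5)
Your proof is correct and follows essentially the same route as the paper: you get $\tilde L$-integrability and the equivalence of $L$-integrability with $R_D=0$ from the stated bracket relations, and you locate the torsion in the $(\mathsf{h},\mathsf{h},\mathsf{v})$ component of $\rd\omega$ and the curvature in the $(\mathsf{h},\mathsf{h},\mathsf{h})$ component, disposing of the latter by skew-adjointness of $R_D$ plus the first Bianchi identity once $T_D=0$. The only cosmetic difference is that you evaluate $\rd\omega$ via Cartan's formula on lifted vector fields, whereas the paper computes it once and for all from the coframe expression $\omega = g_{ab}(x)\,\rd x^a \wedge D v^b$.
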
 
\begin{proof} We have already seen that $\tilde{L}$ is integrable and that $L$ is integrable if and only if the curvature of $D$ vanishes which establishes the first claim. The pair dual to the basis $(\mathsf{h}(\pa_a), \mathsf{v}(\pa_a))\in TM$ is given by 
$(\rd x^a, D v^a)\in T^*\PS$, where 
\be
Dv^a :=\rd v^a + \Gamma_{b\bm{v}}^a \rd x^b  .
\ee
This can be checked by direct evaluation, for instance 
\bea
\mathsf{h}(\pa_a) [D v^b] &=& (\pa_a -\Gamma_{a\bm{v}}^c  \tilde{\pa}_c)[\rd v^b + \Gamma_{d \bm{v}}^b \rd x^d ]\cr
&=&\Gamma_{d\bm{v}}^b \pa_a[\rd x^d] -\Gamma_{a\bm{v}}^c \tilde{\pa}_c[\rd v^b] =0.
\eea
The para-Hermitian structure on $TM$ can also be described as the almost bi-Lagrangian structure $(\omega,K)$ where $ K(\rd x^a)=\rd x^a$, $K(Dv^a)=-Dv^a$ and the almost symplectic form is  given by 
\be
\omega =   g_{ab}(x) \rd x^a \wedge  D v^b.
\ee
It is straightforward to compute the differential of $\omega$
\be
\rd \omega = g_{cd}T_{ab}^d \rd x^a \wedge \rd x^b \wedge D v^c 
+ \frac12 R_{abc\bm{v}} \rd x^a \wedge \rd x^b \wedge \rd x^c 
\ee
where we have introduced the torsion $T_{ab}^c =\Gamma_{[ab]}^c$ and the curvature tensor  $ R_{abc\bm{v}}= R_{abcd} v^d$.
From this expression we see that $\rd\omega(\tilde{P}(X),\tilde{P}(Y),\tilde{P}(Z))=0$, and $\rd\omega(\tilde{P}(X),{P}(Y),{P}(Z))$ vanishes if  the torsion vanishes.
Since $\tilde{L}$ is integrable this establishes that $TM$ is $\Lt$-para-K\"ahler if $D$ is torsionless. It is also clear that the condition $\rd \omega=0$ implies that the torsion vanishes. On the other hand, if the torsion vanishes the second term in this expression also vanishes by the Bianchi identity. This concludes that the structure is almost para-K\"ahler if and only if the torsion vanishes.
\end{proof}

\subsection{$L$-para-Hermitian manifolds }
In the almost para-Hermitian geometry, the two distributions $L$ and $\tilde{L}$ play an equivalent role. For the purpose of our presentation it will be useful to  think from now on of $L$ as being ``tangent to the space-time base'' and $\tilde{L}$ as being ``tangent to the momentum fibre'' as in the previous example.

The integrability of $L$, which is insured if $N_{\PP}=0$ (see \eqref{eq:nijenhuis_split}), means that there exists a foliation by $d$-dimensional leaves ${\cal F}$ such that $L$ is tangent to the leaves, in other words $T{\cal F} =L$.
There is a profound interplay between the integrability of $L$ and properties of $\omega$. For instance, if $L$ is integrable then the pull back of $\rd \omega$ on ${\cal F}$ must vanish. These  relationships are investigated in the following lemmas. In the following, one uses the definition $N_{\PP}(X,Y,Z):= \eta(N_{\PP}(X,Y),Z)$.
\begin{Lem}
The cyclic permutation of the projected  Nijenhuis tensor is given by
\be
\sum_{(X,Y,Z)} N_{\PP}(X,Y,Z) 
= \rd \omega(P(X),P(Y),P(Z)).
\ee
Given a para-Hermitian connection $\n$ with torsion $T(X,Y,Z)=\eta(\hat{T}(X, Y) ,Z)$ we have the relationship
\be\label{NT}
N_{\PP}(X,Y,Z)= -T(P(X),P(Y),{P}(Z)).
\ee
\end{Lem}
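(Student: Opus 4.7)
The plan is to prove both statements as direct algebraic consequences of three facts: (i) both $L$ and $\tilde L$ are Lagrangian with respect to $\eta$ and $\omega$, so $\eta$ and $\omega$ pair nontrivially only between $L$ and $\tilde L$; (ii) $K$ acts as $+\id$ on $L$ and as $-\id$ on $\tilde L$, so $\omega=\eta K$ restricts to $-\eta$ on the mixed sector $\tilde L\times L$; and (iii) any para-Hermitian connection preserves $K$ and hence sends sections of $L$ (resp.\ $\tilde L$) to sections of $L$ (resp.\ $\tilde L$).

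For the first identity I would first unpack the definition, using the nullity of $\tilde L$, as
\begin{equation*}
N_{\PP}(X,Y,Z) \;=\; \eta(\PPt[\PP(X),\PP(Y)],Z) \;=\; \eta([\PP(X),\PP(Y)],\PP(Z)),
\end{equation*}
where the second equality uses that the $\PPt(Z)$ contribution drops out because $\eta(\tilde L,\tilde L)=0$. Converting the remaining pairing to $\omega$ via $\omega(\PPt(U),\PP(V))=-\eta(\PPt(U),\PP(V))$ gives $N_{\PP}(X,Y,Z)=-\omega([\PP(X),\PP(Y)],\PP(Z))$. Then I would apply the Cartan formula
\begin{equation*}
\rd\omega(A,B,C)=\sum_{(A,B,C)}\!\bigl(A\,\omega(B,C)-\omega([A,B],C)\bigr)
\end{equation*}
to $A=\PP(X),B=\PP(Y),C=\PP(Z)$. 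Because $\omega|_L=0$ by Lagrangianity, the three derivative terms vanish, and what remains are exactly the three bracket contributions, which by the identity just derived reassemble into $\sum_{(X,Y,Z)}N_{\PP}(X,Y,Z)$.

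For the second identity I would substitute $\hat T(X,Y)=\n_X Y-\n_Y X-[X,Y]$ into $T(\PP(X),\PP(Y),\PP(Z))$. Point (iii) above says $\n_{\PP(X)}\PP(Y)$ and $\n_{\PP(Y)}\PP(X)$ are both sections of $L$, so their $\eta$-pairings with $\PP(Z)\in L$ vanish by Lagrangianity. Only the bracket term survives, giving
\begin{equation*}
T(\PP(X),\PP(Y),\PP(Z))=-\eta([\PP(X),\PP(Y)],\PP(Z))=-N_{\PP}(X,Y,Z),
\end{equation*}
which is \eqref{NT}. The computation is essentially bookkeeping; the only subtlety is keeping track of the signs induced by $\omega=\eta K$ according to whether each argument sits in $L$ or $\tilde L$, so I would isolate the clean conversion $\omega(\PPt(U),\PP(V))=-\eta(\PPt(U),\PP(V))$ at the outset and use it uniformly throughout.
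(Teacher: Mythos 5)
Your proof is correct, and for the cyclic identity it takes a genuinely different (and somewhat more economical) route than the paper. The paper first establishes \eqref{NT} by expanding $[\PP(X),\PP(Y)]$ through a para-Hermitian connection, and only then deduces $\sum_{(X,Y,Z)} N_{\PP}(X,Y,Z)=\rd\omega(\PP(X),\PP(Y),\PP(Z))$ from the connection-dependent identity $\rd\omega(X,Y,Z)=-\sum_{(X,Y,Z)}T(X,Y,K(Z))$, which is itself obtained in the appendix from the Cartan-type expansion of $\rd\omega$ together with $\n\omega=0$. You instead prove the cyclic identity directly: rewriting $N_{\PP}(X,Y,Z)=-\omega([\PP(X),\PP(Y)],\PP(Z))$ and inserting the projected arguments into the Cartan formula, where the derivative terms vanish because $\omega|_L=0$. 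This makes transparent that the first identity is a statement about $(\eta,\omega,K)$ alone, independent of any choice of compatible connection, whereas the paper's route emphasizes the torsion interpretation that it reuses in the subsequent corollaries. Your proof of \eqref{NT} is essentially the paper's argument run in the opposite direction: expand $\hat T(\PP(X),\PP(Y))$, kill the covariant-derivative terms using $\n K=0$ (so $\n$ preserves $\Gamma(L)$) together with isotropy of $L$, and identify the surviving bracket term with $-N_{\PP}(X,Y,Z)$. One cosmetic point: when you pass from $\eta(\PPt[\PP(X),\PP(Y)],Z)$ to $\eta([\PP(X),\PP(Y)],\PP(Z))$ you cite only $\eta(\Lt,\Lt)=0$, but discarding the $\PP$-projection in the first slot also uses $\eta(L,L)=0$ (equivalently \eqref{eq:etaPPtilde}); both facts are among your standing assumptions, so nothing is at stake.
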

\begin{proof}
From the definition of $N$ we have that
\bea
N_{\PP}(X,Y,Z)&=& \eta(\n_{P(X)}P(Y)-\n_{P(Y)}P(X)- \hat{T}(P(X),P(Y)),P(Z))\label{Nnlc}\\
&=& \eta(P(\n_{P(X)}Y)-P(\n_{P(Y)}X),P(Z))-  {T}(P(X),P(Y),P(Z))\nonumber
\eea
The second equality follows from the fact that the connection is para-Hermitian, and the first two terms in the last equality vanish because $L$ is Lagrangian. This establishes (\ref{NT}) and the first part of the lemma follows from (\ref{domega}) in the appendix, which establishes that 
\be
\rd\omega(X,Y,Z) = -\sum_{(X,Y,Z)} {T}(X, Y, K(Z)).
\ee
\end{proof}
\noindent The relationship between $N_\pr$ and $\lc\omega$ is encoded in the following formula
\begin{equation}
\begin{aligned}
2 N_\pr (X,Y,Z) 
&= \lc_{P(X)}\omega(P(Y),P(Z))- \lc_{P(Y)}\omega(P(X),P(Z)) \\
&= \rd \omega (P(X),P(Y), P(Z))- \lc_{P(Z)} \omega(P(X),P(Y)) .
\end{aligned}
\label{Nddw}
\end{equation}
This can be seen starting from the definition  (\ref{Nnlc}) and using $\lc$ instead. Since the torsion of the Levi-Civita connection vanishes and $2\eta(\lc_{P(X)}P(Y),P(Z))
=\eta( (\lc_{P(X)}K)Y,P(Z) ) $, we obtain the first equality from (\ref{omegaK}).
The second equality follows directly from the definition of the exterior derivative.
 This  leads to the following corollary: 
\begin{Cor}\label{wclosed}
The para-Hermitian structure $(\eta,\omega)$  is $L$-para-K\"ahler if and only if  the Levi-Civita connection of $\eta$ is symplectic along $L$ : $\lc_{P(X)} \omega=0$.
\end{Cor}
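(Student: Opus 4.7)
The plan is to combine property \eqref{eq:omegaPtP} of Lemma \ref{firstlemma} with the identity \eqref{Nddw} to reduce the corollary to a component-by-component check of $\lc_{P(X)}\omega$ along the splitting $T\PS = L \oplus \tilde{L}$.

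First I would observe that by \eqref{eq:omegaPtP} the two-form $\lc_X\omega$ has no ``mixed'' components for any $X$, so the statement $\lc_{P(X)}\omega = 0$ is equivalent to the two component equations
\[
\lc_{P(X)}\omega(P(Y),P(Z)) = 0, \qquad \lc_{P(X)}\omega(\tilde{P}(Y),\tilde{P}(Z)) = 0,
\]
holding for all $X,Y,Z$. It therefore suffices to show that the two almost $L$-para-K\"ahler conditions \eqref{LparaKahler}, together with the integrability of $L$, are equivalent to these two component equations.

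For the mixed condition I would expand $\rd\omega(P(X),\tilde{P}(Y),\tilde{P}(Z))$ using \eqref{rdom}: two of the three cyclic summands involve $\lc$-derivatives of $\omega$ evaluated on one argument in $L$ and one in $\tilde{L}$, and thus vanish by \eqref{eq:omegaPtP}. What survives is exactly $\lc_{P(X)}\omega(\tilde{P}(Y),\tilde{P}(Z))$, establishing the desired equivalence for the $(\tilde{L},\tilde{L})$ component of $\lc_{P(X)}\omega$.

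For the purely $L$ piece I would invoke \eqref{Nddw}, which gives $\rd\omega(P(X),P(Y),P(Z)) - \lc_{P(Z)}\omega(P(X),P(Y)) = 2 N_\pr(X,Y,Z)$. Since an $L$-para-K\"ahler structure is $L$-para-Hermitian one has $N_\pr = 0$, and the first condition of \eqref{LparaKahler} collapses to $\lc_{P(Z)}\omega(P(X),P(Y)) = 0$. Conversely, assuming $\lc_{P(X)}\omega = 0$ one reads off $N_\pr \equiv 0$ directly from the first equality in \eqref{Nddw}, so $L$ is automatically integrable, and both $\rd\omega$ conditions in \eqref{LparaKahler} then follow immediately from \eqref{rdom}. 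The entire argument is bookkeeping once Lemma \ref{firstlemma} and identity \eqref{Nddw} are granted; the only slightly subtle point is ensuring that the ``if'' direction also recovers $L$-integrability (which is part of the definition of $L$-para-K\"ahler), and this is precisely what the first equality of \eqref{Nddw} delivers.
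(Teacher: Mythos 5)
Your proof is correct and follows essentially the same route as the paper: both directions rest on Lemma \ref{firstlemma} (the vanishing of the mixed components \eqref{eq:omegaPtP} and the cyclic formula \eqref{rdom}) together with the identity \eqref{Nddw} relating $N_\pr$, $\rd\omega$ and $\lc\omega$ on $L$. Your component-by-component bookkeeping, including recovering $L$-integrability from the first equality of \eqref{Nddw} in the ``if'' direction, matches the paper's argument step for step.
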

\begin{proof}
If the Levi-Civita connection of $\eta$ is symplectic along $L$ then one sees directly that the conditions (\ref{LparaKahler}) are satisfied thanks to Lemma \ref{firstlemma}, and that $N_P$ vanishes due to \eqref{Nddw}, which means that $(\eta,\omega)$ is $L$-para-K\"ahler. On the other hand being $L$-para-K\"ahler implies that both $N_\pr$ and  the pull-back of $\rd \omega$ on ${\cal F}$ vanish. From  (\ref{Nddw}) it follows that $\lc_{P(Z)} \omega(P(X),P(Y))=0$. Now, Lemma \ref{firstlemma} implies that the components $\lc_{P(Z)} \omega(P(X),\tilde{P}(Y))$ also vanish and that $\rd\omega(P(X),\tilde{P}(Y),\tilde{P}(Z)) = \lc_{P(X)}\omega(\tilde{P}(Y),\tilde{P}(Z))$. This last component also vanishes by the second condition of (\ref{LparaKahler}). This establishes the corollary. 
\end{proof}
If $(\PS,\eta,\omega)$ is para-K\"ahler then it is $(L,\Lt)$-para-K\"ahler. Using the previous lemma twice we can then conclude that 
\begin{Cor}\label{para-Kahler}
The Levi-Civita connection is a para-Hermitian connection if and only if 
$ (\PS,\eta,\omega)$ is para-K\"ahler.
\end{Cor}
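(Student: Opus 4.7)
The plan is to deduce both directions directly from Corollary \ref{wclosed} by applying it twice, once for $L$ and once for $\tilde{L}$, using the decomposition $X = P(X)+\tilde{P}(X)$ to recombine the two projected statements into the full one.

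For the forward direction, I would assume $\lc$ is para-Hermitian, so $\lc\omega = 0$ identically. From Lemma \ref{firstlemma}, equation (\ref{rdom}), and the torsionless property of $\lc$, this immediately gives $\rd\omega = 0$. Moreover, the identity $\lc\omega = 0$ trivially implies $\lc_{P(X)}\omega = 0$ for every $X$, so by Corollary \ref{wclosed} the manifold is $L$-para-K\"ahler, hence in particular $L$ is integrable. Applying the same corollary with $L$ and $\tilde{L}$ exchanged (which is the symmetric version explicitly allowed after Definition), I get that $\tilde{L}$ is integrable as well. Together with $\rd\omega = 0$, this is exactly the definition of para-K\"ahler.

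For the reverse direction, I would assume $(\PS,\eta,\omega)$ is para-K\"ahler, which implies that it is both $L$-para-K\"ahler and $\tilde{L}$-para-K\"ahler. Applying Corollary \ref{wclosed} in both forms gives
\begin{equation}
\lc_{P(X)}\omega = 0, \qquad \lc_{\tilde{P}(X)}\omega = 0, \qquad \forall\, X\in \se(T\PS).
\end{equation}
Since $\id = P + \tilde{P}$, linearity in the direction of differentiation yields $\lc_X\omega = 0$ for all $X$. Combined with the defining property $\lc\eta = 0$ of the Levi-Civita connection, this shows that $\lc$ preserves both $\eta$ and $\omega$, hence is para-Hermitian by the Definition of para-Hermitian connection.

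There is essentially no technical obstacle here: the full content has already been packaged into Corollary \ref{wclosed} and Lemma \ref{firstlemma}. The only thing to be careful about is the logical bookkeeping, namely that ``para-K\"ahler'' must be unpacked as the conjunction of ``$L$-para-K\"ahler'' and ``$\tilde{L}$-para-K\"ahler'' (which follows from the fact that $\rd\omega$ vanishes on all triples and that $L$ and $\tilde{L}$ are both integrable), so that Corollary \ref{wclosed} can be invoked in its $L$ and $\tilde{L}$ variants independently. The argument is therefore very short: two applications of the preceding corollary, glued together by $\id = P+\tilde{P}$.
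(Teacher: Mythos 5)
Your proof is correct and follows essentially the same route as the paper: the paper's own argument is precisely to apply Corollary \ref{wclosed} twice, once for $L$ and once for $\tilde{L}$, after noting that para-K\"ahler is equivalent to being both $L$- and $\tilde{L}$-para-K\"ahler, with the two projected statements recombined via $\id = P + \tilde{P}$. Your extra use of \eqref{rdom} to get $\rd\omega=0$ directly from $\lc\omega=0$ is a harmless shortcut within the same argument.
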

This corollary is of central importance to us. The para-K\"ahler case is the closest generalisation of DFT geometry. The previous results shows that in this case the Levi-Civita connection is the natural connection to use since it preserves $(\eta,\omega)$.  

\subsection{The para-K\"ahler Case}
The para-K\"ahler manifolds, where both $L$ and $\Lt$ are integrable and $\rd \omega=0$, are an important subset of examples straightforwardly generalising the flat DFT geometry. 
For this section, we assume that $(\PS,\eta,\omega)$ is a para-K\"ahler manifold. As Corollary \ref{para-Kahler} establishes, this means that the Levi-Civita connection is the natural connection to use since it preserves $(\eta,\omega)$ while $\omega$ is symplectic.  We now explore further properties of $\lc$ on a K\"ahler manifold.
 
First, the condition of being para-Hermitian implies that there exists a double foliation $({\cal F},{\cal \tilde{F}})$ with $T{\cal F}= L$ and $T{\tilde{\cal F}}=\Lt $. In practice, this means that there exist local coordinates $(x^a,\tilde{x}_a)$ such that $K(\pa_a) = + \pa_a$ and $K(\pa_a)= - \tilde{\pa}^a$,
where we have denoted $\pa_a = \frac{\pa}{\pa{x^a}}$ and $\tilde{\pa}^a = \frac{\pa}{\pa{\tilde{x}_a}}$. In these coordinates, the leaves of $ \cal{F}$ are given by $\tilde{x}^a=\mathrm{const.}$ and similarly for $\cal{\tilde{F}}$ and $x^a$. This in turn implies that $\eta$ and $\omega$ take off-diagonal form and can be expressed in terms of functions $\eta_{a}{}^b(x,\tilde{x})$ as 
\be 
\eta=  \eta_{a}{}^b(x,\tilde{x})(\rd x^a\otimes \rd \tilde{x}_b  +  \rd \tilde{x}_b\otimes\rd x^a), \qquad
\omega=  \eta_{a}{}^b(x,\tilde{x})\rd x^a\wedge  \rd \tilde{x}_b.
\ee
It is convenient to define the partial differentials $\mathring{D} := \rd x^a\lc_{\pa_a} $ and $\mathring{\tilde{D}} :=\rd \tilde{x}^a \lc_{\tilde\pa^a}$.
The importance of these differentials lies in the following lemma:
\begin{Lem}\label{FlatK} If 
$(\PS,\eta,\omega)$ is para-K\"ahler then the partial differential  $\mathring{D}$ and $\mathring{\tilde{D}}$ are flat. In other words, they satisfy
\be
D^2=0,\qquad \mathring{\tilde{D}}^2=0. 
\ee
\end{Lem}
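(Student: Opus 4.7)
The plan is to exploit the bi-grading on differential forms that the integrable splitting of a para-K\"ahler manifold induces, together with the Levi-Civita connection's compatibility with this splitting.

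First, I would fix adapted coordinates $(x^a,\tilde{x}_a)$ in which $L=\mathrm{span}(\pa_a)$ and $\Lt=\mathrm{span}(\tilde{\pa}^a)$; these exist because both Lagrangians are integrable on a para-K\"ahler manifold. The annihilators $\Lt^\circ=\mathrm{span}(\rd x^a)$ and $L^\circ=\mathrm{span}(\rd\tilde{x}_a)$ split $T^*\PS$ and endow $\Omega^\bullet(\PS)$ with a bi-grading $\Omega^{(p,q)}$ that counts separately the $\rd x$- and $\rd\tilde{x}$-factors.

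Next I would invoke Corollary~\ref{para-Kahler}: on a para-K\"ahler manifold $\lc$ is a para-Hermitian connection, so $\lc K=0$. Because $L$ and $\Lt$ are the $\pm 1$ eigenbundles of $K$, this forces $\lc_X\se(L)\subset\se(L)$ and $\lc_X\se(\Lt)\subset\se(\Lt)$ for every vector field $X$. Dualising, $\lc_X$ preserves the decomposition $T^*\PS=\Lt^\circ\oplus L^\circ$, and by the Leibniz rule it therefore preserves the full bi-grading on forms. Consequently $\mathring{D}$ and $\mathring{\tilde{D}}$ are differential operators of pure bi-degrees $(1,0)$ and $(0,1)$ respectively.

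The final step uses torsion-freeness of $\lc$: the exterior derivative of any form equals the antisymmetrisation of its covariant derivative, which in the adapted basis reads
\[
\rd=\rd x^a\,\lc_{\pa_a}+\rd\tilde{x}_a\,\lc_{\tilde{\pa}^a}=\mathring{D}+\mathring{\tilde{D}}.
\]
Squaring and using $\rd^2=0$ gives $\mathring{D}^2+(\mathring{D}\mathring{\tilde{D}}+\mathring{\tilde{D}}\mathring{D})+\mathring{\tilde{D}}^{\,2}=0$, and since the three summands carry distinct bi-degrees $(2,0)$, $(1,1)$, $(0,2)$, each must vanish separately. This delivers $\mathring{D}^2=0$ and $\mathring{\tilde{D}}^{\,2}=0$, with the cross-anticommutator $\{\mathring{D},\mathring{\tilde{D}}\}=0$ as a bonus.

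The main obstacle is the second step: verifying that $\lc$ genuinely respects the splitting. This is precisely where the full para-K\"ahler hypothesis (not merely para-Hermiticity) is essential, since $\rd\omega=0$ is exactly what upgrades $\lc$ to a para-Hermitian connection through Corollary~\ref{para-Kahler}. Without this, $\lc_X$ could rotate $L$ into $\Lt$, the operators $\mathring{D},\mathring{\tilde{D}}$ would no longer carry pure bi-degrees, and the bi-graded decomposition of $\rd^2=0$ would fail. Once that step is granted, the rest of the argument is essentially formal.
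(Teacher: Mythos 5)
Your first two steps are sound and match ingredients the paper also uses (adapted coordinates from integrability, and $\lc K=0$ via Corollary~\ref{para-Kahler} so that $\lc$ preserves the splitting and hence the bi-grading), but the final step conflates two different statements. The identity $\rd=\mathring{D}+\mathring{\tilde{D}}$ and the bidegree decomposition of $\rd^2=0$ live on \emph{scalar-valued} differential forms, so what you actually obtain is that $\mathring{D}^2$ vanishes as an operator on ordinary forms. That is not what ``flat'' means in the lemma: flatness is the vanishing of the curvature of the partial connection, i.e.\ $[\lc_{\pa_a},\lc_{\pa_b}]=R(\pa_a,\pa_b)=0$ acting on tensors, which is how the lemma is used later (``the projected curvature tensor $\mathring{R}_\pr$ vanishes'', leaves are affine). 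On tensor-valued forms there is no exterior derivative squaring to zero --- the square of the exterior covariant derivative \emph{is} the curvature --- so the mechanism you invoke is unavailable exactly where the curvature sits. Indeed, on any pseudo-Riemannian manifold one has $\rd=e^A\wedge\lc_{e_A}$ and $\rd^2=0$ with no flatness at all; correspondingly, the $(2,0)$-component of your identity evaluated on forms built only from the $\rd x^a$ is nothing but the first Bianchi identity $R_{[abc]}{}^{d}=0$ and carries no curvature information.

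The argument can be repaired, but the missing steps are precisely where the geometry enters. Evaluating your operator identity on the $(0,1)$-forms one finds (the connection-symbol terms cancel by symmetry of the Christoffels) $\mathring{D}^2\,\rd\tilde{x}_c=\tfrac12\,\rd x^a\wedge\rd x^b\wedge\bigl(R(\pa_a,\pa_b)\cdot\rd\tilde{x}_c\bigr)$, and since these three-forms are independent this does give $R(\pa_a,\pa_b)|_{\Lt}=0$. You must then still transfer this to $L$: because $R(\pa_a,\pa_b)$ preserves the eigenbundles, $L$ is isotropic and $\lc$ is metric, one has $\eta(R(\pa_a,\pa_b)\pa_c,\pa_d)=0$ and $\eta(R(\pa_a,\pa_b)\pa_c,\tilde\pa^d)=-\eta(\pa_c,R(\pa_a,\pa_b)\tilde\pa^d)=0$, whence $R(\pa_a,\pa_b)=0$ by nondegeneracy (equivalently, use the pair symmetry of $\mathring{R}$ as in Section~4.3). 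Only after these additions is $\mathring{D}$ flat in the sense asserted. The paper sidesteps this by a different route: from $\lc_{P(X)}\omega=0$ (Corollary~\ref{wclosed}) it extracts a coframe $\eta^a$ of one-forms on $\mathcal{F}$ that is parallel for $\mathring{D}$, and the existence of a parallel coframe forces the projected curvature to vanish.
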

\begin{proof}  
First, since the Levi-Civita connection is torsionless, it means that the condition $\rd \omega =0$  is equivalent to the conditions $\mathring{D}\omega= \mathring{\tilde{D}}\omega=0$. If one introduces the one-forms $\eta^b :=  \rd x^a\eta_a{}^b $ and $\tilde{\eta}_a :=\eta_a{}^b \rd\tilde{x}_b$, 
one can easily see that the closure condition implies 
\be\label{DDeta}
\mathring{D}\eta_a=  \mathring{\tilde{D}}\tilde{\eta}^a =0.
\ee
Now the condition that $\eta_a{}^b$ is invertible means that $\eta^a$ form a basis of one-forms on $\cal F$. The previous condition implies that this basis is parallel with respect to $\mathring{D}$. This can be true if and only if the curvature of $\mathring{D}$ along $L$ vanishes, that is if $\mathring{D}^2=0$. The same conclusion is reached for $\mathring{\tilde{D}}$. This proves the lemma. 
\end{proof}

This lemma expresses that in the para-K\"ahler case even if $\eta$ is not flat the leaf geometry, i.e. the geometry induced on the leaf, is flat. This is related to the well-known fact in symplectic geometry that the leaves of a symplectic foliation admits a canonical flat connection called the Bott connection \cite{
Bott,Weinstein,forger2013lagrangian}. This result of central importance means that leaves of a symplectic foliation are affine manifolds. The previous lemma is a more direct derivation of this fact in the para-K\"ahler case.

Let's finally remark that the conditions (\ref{DDeta}) mean that the forms $\eta^a$ and $\tilde{\eta}_a$ are closed. Using the Poincar\'e lemma twice, we conclude that the tensor $\eta_a{}^b$ can locally be written in terms of a potential $\phi$
\be
\eta_{a}{}^b(x,\tilde{x}) =\pa_a \tilde{\pa}^b \phi(x,\tilde{x})
\ee
where $\phi$ is the para-K\"ahler analog of the K\"ahler potential \cite{Moroianu}.

\section{Connections on Para-Hermitian Manifolds}

In the previous section we have seen the introduction of the almost symplectic form $\omega$ which together with the metric $\eta$ gives rise to para-Hermitian geometry. We will now explore canonical constructions related to connections in this setting.

As we have seen, given an almost para-Hermitian manifold $(\PS,\eta, K)$, we can construct two projections, $\PP$ and $\PPt$ in $\mathrm{End}(T\PS)$, which satisfy $\PP+\PPt=\id$ and $\PP-\PPt=K$. Their respective images are $L$ and $\tilde{L}$.
The compatibility condition $K^T\eta K=-\eta$ and the isotropy of $L$ and $\Lt$ translate into the following property
\begin{equation}
\eta(\PP(X),Y) = \eta(X,\PPt(Y)) = \eta(\PP(X),\PPt(Y)),
\label{eq:etaPPtilde}
\end{equation}
which will be needed frequently.

\subsection{Projected and Partial Connections}
\label{sec:genLie}

In order to define a generalisation of the Lie derivative and the corresponding bracket for the curved case, we  need the following objects.
\begin{Def} Given a metric $\eta$, a compatible connection $\nabla$ and a projector $\PP:T\PS \to T\PS$, we denote $L = \mathrm{Im}\,\PP$ and $\tilde{L} = {\mathrm Ker}\, \PP = \mathrm{Im}\,\PPt$ with $\PPt:= \id-P$ the complimentary projector. The {\bf projected derivative} $\Dbr_X: \Gamma(T\PS) \to \Gamma(T\PS)$ is defined by 
\begin{equation}
  \Dbr_XY := \nabla_{P(X)} Y. 
\end{equation}
Given a function $f\in C^\infty(\PS)$, the  projected differential $\Db f \in \Gamma(T\PS)$ is a vector field such that 
\be
\eta(X, \Db f) :=  \PP(X)[f],
\ee
where $X[f]$ denotes the action of the vector field $X$ on a function $f$.
\label{def:partialD}
\end{Def}
\noindent 
Note that $\Dbr_X$ satisfies the Leibniz rule
\be \label{Leibniz}
D_X(f Y) = P(X)[f] \, Y + f(D_X Y).
\ee
Viewing $\PP$ as the surjective map $T\PS\rightarrow L$ (where of course $L\hookrightarrow T\PS$), then $\Dbr_X$ defines a \emph{partial connection} along $L$ \cite{FoliationConnection,forger2013lagrangian}:
\begin{equation}
\begin{aligned}
  \n : \Gamma(L)\, \times & \, \Gamma(T\PS) &\rightarrow & \quad \Gamma(T\PS) \\
  (x\, , &\, Y) &\mapsto & \quad \n_xY := \Dbr_X Y. 
\end{aligned}
\end{equation}
for some $X\in\Gamma(T\PS)$ such that $P(X)=x\in\Gamma(L)$. Moreover, $\Db f$ is an element of $\tilde{L}$, i.e $\PP(\Db f)=0$. This can be seen by using \eqref{eq:etaPPtilde} 
\begin{equation}
  \eta(P(\Db f),X) = \eta(\Db f,\PPt(X)) = \PP(\PPt(X))[f] = 0 \, .
\end{equation}

Using this projected connection, we can define the analog of curvature, torsion and the Nijenhuis tensor (cf. \eqref{eq:nijenhuis_split} above):
\begin{Def}
Given a metric $\eta$, a compatible connection $\nabla$ and a projector $P$, the projected Riemann tensor $\hat{R}_\pr$, the projected torsion $\tau_\pr$ and the projected Nijenhuis tensor $N_\pr$ are elements of $T\PS$ given by 
\begin{align}
\hat{R}_\pr(X,Y)Z  &:= [\Dbr_X,\Dbr_Y] Z - \Dbr_{[\PP(X),\PP(Y)]}Z \\
\tau_\pr(X,Y) &:= \PP([\Dbr_XY-\Dbr_YX]-[\PP(X),\PP(Y)]),\\
N_\pr(X,Y)&:= \tilde{\PP}( [\PP(X),\PP(Y)]).
\end{align}
We also define ${R}_\pr(X,Y,Z,W):=\eta (\hat{R}_\pr(X,Y)Z, W)$ and we introduce the twist of the projected derivative
\begin{equation}
\theta_\Dbr(X,Y) := \eta(X,\Db Y),
\end{equation}
which satisfies $\eta(Z,\theta_\Dbr(X,Y))= \eta(X,\Dbr_Z Y)$. 
\label{def:projectedobjects}
\end{Def}

Note that  $L$, the image of $\PP$, forms an integrable distribution if and only if the projected Nijenhuis tensor $N_\pr(X,Y)$ vanishes. Since $\Dbr$ can be viewed as a partial connection $\Dbr: \Gamma(T\PS) \rightarrow \Gamma(L^*)\times \Gamma(T\PS)$, we can define the curvature of $\Dbr$ as $R_\pr = \Dbr^2: \Gamma(T\PS)\rightarrow \Gamma(\wedge^2 L^* \otimes \mathrm{End}\,T\PS)$ (where $L^*$ is the dual vector bundle of $L$). This curvature $R_\pr$ is related to $R$, the usual curvature of $\n$, via
\begin{equation}
\hat{R}_\pr(X,Y) = \hat{R}(\PP(X),\PP(Y)) \, .
\label{eq:curvature}
\end{equation}
The projected curvature $\hat{R}_\pr$ is a proper $(3,1)$ tensor as can be seen from
\begin{equation}
\begin{aligned}
\hat{R}_\pr(fX,Y)Z &= \hat{R}_\pr(X,fY)Z = f\hat{R}_\pr(X,Y)Z \\
\hat{R}_\pr(X,Y)(fZ) &= f\hat{R}_\pr(X,Y)Z + N_\pr(X,Y)[f] Z = f\hat{R}_\pr(X,Y)Z
\end{aligned}
\end{equation}
where for the last equality to hold we need integrability of $L$, i.e. $N_\pr(X,Y)=0$.

The projected torsion $\tau_\pr$ is a $(2,1)$ tensor valued in $L$ and it is easy to check that $\tau_\pr(fX,Y)=\tau_\pr(X,fY)=f\tau_\pr(X,Y)$. It can be expressed in terms of the ordinary torsion $T(X,Y)=\n_XY-\n_YX - [X,Y]$ as 
\be
\tau_\pr(X,Y)= T(\PP(X),\PP(Y)) - \frac12\big( (D_XK)Y - (D_YK)X\big) + N_\pr(X,Y).
\label{eq:torsionrelation}
\ee
Like $\Db f$, the projected twist vector $\theta_\Dbr(X,Y)$ is valued in $\tilde{L}$, i.e. $\PP(\theta_\Dbr)=0$
\begin{equation}
  \eta(Z,\PP(\theta_\Dbr(X,Y))) = \eta(\PPt(Z),\theta_\Dbr(X,Y)) = \eta(X,\Dbr_{\PPt(Z)}Y) = 0 \, .
\end{equation}
It is tensorial in $X$ while $\theta_\Dbr (X,fY) =  \eta(X,Y)\Db f + f \theta_\Dbr(X,Y)$.

The definition given for the projected torsion and curvature is still meaningful even if $L$  is not integrable, i.e. when $N_\pr$ does not vanish. Then $\hat{R}_\pr$ is related to the $(3,1)$ Riemann tensor $\hat{R}$ of $\n$ via
\be
 \hat{R}_\pr(X,Y)Z = \hat{R}(\PP(X),\PP(Y))Z +  \n_{N_\pr(X,Y)}Z 
\ee
which in general is of course not tensorial in $Z$.

\subsection{Generalized Lie Derivative and Bracket}
The main ingredient entering the construction of both generalised geometry and double field theory is the Dorfman derivative that generalises the notion of the Lie derivative. The purpose of our work is to generalise this notion to the case where $\eta$ is not necessarily flat. We can now use any metrical connection of $\eta$ to form a generalisation of the Lie derivative \cite{
Vaisman:2012ke}.

\begin{Def} Given a metric $\eta$, a compatible connection $\nabla$ and a projector $\PP$, the associated {\bf generalised Lie derivative} (Dorfman derivative) $\Lb^\Dbr : \Gamma(T\PS) \times \Gamma(T\PS) \to \Gamma(T\PS)$  is given by  
\begin{equation}
\begin{aligned}
  \Lb^\Dbr_XY &:=\Dbr_{X}Y-\Dbr_{Y}X+ \theta_\Dbr(Y, X),
\end{aligned}
\end{equation}
where the twist $\theta_\Dbr(Y,X):= \eta(Y, \Db X)$ is such that $\eta(Z, \theta(Y,X)) = \eta(Y, \Dbr_Z X)$.
\label{def:genLie}
\end{Def}

This generalised Lie derivative satisfies three key properties: the Leibniz rule, compatibility with $\eta$ and a normalisation condition which are respectively given by
\begin{equation}
\begin{aligned}
\Lb^\Dbr_X (f Y) &=  f (\Lb^\Dbr_XY)  + \PP({X})[f]\,Y, \\
\PP(X) [\eta(Y,Z)] &= \eta( \Lb^\Dbr_X Y, Z) + \eta(Y, \Lb^\Dbr_XZ), \\
\eta( Y, \Lb^\Dbr_XX)  &= \frac12 \PP(Y)[\eta(X,X)].
\end{aligned}
\label{eq:genLieproperties}
\end{equation}
The first property follows directly from the fact that only the first term in the definition of $\Lb^\Dbr$ contains a derivative of $Y$. The second property -- which implies the preservation of $\eta$  for any vector field -- distinguishes the generalised Lie derivative from the usual Lie derivative for which such a condition implies that $X$ is a Killing vector field. It can be simply proven by summing the following two terms
\be
\begin{aligned}
 \eta( \Lb_Z^\Dbr X, Y) &=& \eta(\Dbr_Z X, Y)-\eta(\Dbr_X Z,Y) + \eta(X, \Dbr_Y Z),\cr
 \eta(X,\Lb^\Dbr_Z Y) &=& \eta(X, \Dbr_Z Y) -\eta(X, \Dbr_Y Z) + \eta(Y, \Dbr_X Z),
\end{aligned}
\ee
and using  the compatibility of $\n$ with $\eta$. Moreover, unlike the usual Lie derivative, a generalised Lie derivative of a vector field along itself does not necessarily vanish due to the third property. 


The generalised Lie derivative $\Lb^\Dbr$ does not satisfy the Jacobi identity in general. In order to measure its failure of the Jacobi identity to be satisfied, we introduce the so-called Jacobiator.
\begin{Def}
Given a generalized Lie derivative $\Lb^\Dbr$ as in Definition \ref{def:genLie}, the {\bf Jacobiator} is defined as
\begin{equation}
J^\Dbr(X,Y,Z,W) := \eta([\Lb^\Dbr_X,\Lb^\Dbr_Y]Z - \Lb^\Dbr_{\Lb^\Dbr_XY}Z,W) \, .
\end{equation}
The conditions on $\Lb^\Dbr$ under which this object is tensorial are given in Appendix \ref{sec:jacobiator}. If they are satisfied, $J^\Dbr=0$ states that the Jacobi identity for $\Lb^\Dbr$ holds.
\label{def:jacobiator}
\end{Def}
We will take a detailed look at the Jacobiator for para-K\"ahler and para-Hermitian geometries in the next subsection.

\begin{Rem}
In generalised geometry and double field theory, the Dorfman derivative satisfies the Jacobi identity (under certain constraints). Unlike the ordinary Lie derivative, these generalised derivatives $\Lb^\Dbr_XY$ are not skew-symmetric under the interchange of $X$ and $Y$. Taking the skew-symmetric combination $\bl X,Y\br^\Dbr := \frac12(\Lb^\Dbr_XY-\Lb^\Dbr_YX)$ gives the Courant bracket \cite{courant1990dirac}, which is obviously skew but does not satisfy Jacobi identity. We will mainly work with the generalised derivative instead of the bracket as they are equivalent and related by
\begin{equation}
\Lb^\Dbr_XY = \bl X,Y\br^\Dbr + \frac12\Db\eta(X,Y) \, .
\end{equation}
\end{Rem}

\subsection{Para-K\"ahler and $L$-Para-Hermitian}
\label{sec:para}
Now that we have properly introduced and defined the generalisations of various concepts and quantities, we would like to use them to find connections $\n$ and corresponding (projected) generalized Lie derivatives $\Lb^\Dbr_X$ (with $\Dbr_{X}=\n_{\PP(X)}$) for the para-K\"ahler and para-Hermitian geometries. The Jacobiator $J^\Dbr$ from Definition \ref{def:jacobiator} is evaluated explicitly in Appendix \ref{sec:jacobiator}, we find
\begin{equation}
\begin{aligned}
J^\Dbr(X,Y,Z,W) &=  R_\pr(X,Y,Z,W) + R_\pr(Y,Z,X,W)  + R_\pr(Z,X,Y,W) \\
	&- R_\pr(W,Z,X,Y) - R_\pr(W,X,Y,Z) - R_\pr(W,Y,Z,X) \\
	&- \eta(W,\n_{\tau_\pr(X,Y)}Z) - \eta(W,\n_{\tau_\pr(Y,Z)}X)  
			- \eta(W,\n_{\tau_\pr(Z,X)}Y) \\
	&- \eta(Z,\n_{\tau_\pr(X,W)}Y) - \eta(X,\n_{\tau_\pr(Y,W)}Z) 
			+ \eta(Y,\n_{\tau_\pr(W,Z)} X) \\
	&+ \eta(\theta_\Dbr(Z,X),\theta_\Dbr(W,Y)) - \eta(\theta_\Dbr(Z,Y),\theta_\Dbr(W,X)) \\
	&- \eta(\theta_\Dbr(Y,X),\theta_\Dbr(W,Z)) \, 
\end{aligned}
\label{eq:jacobiator}
\end{equation}
using $R_\pr$, $\tau_\pr$ and $\theta_\Dbr$ from Definition \ref{def:projectedobjects}. Comparing this to the Jacobiator of DFT \eqref{eq:jacobiatorDFT}, we can see extra terms due to curvature and torsion of the connection $\n$. We recall that because $L$ and $\Lt$ are Lagrangian, $\theta_\Dbr \in \se(\Lt)$ and so the last two lines of \eqref{eq:jacobiator} vanish identically. We now analyse the possibility to construct a connection with vanishing Jacobiator in two different cases that are deeper generalisations of double field theory.

The simplest case is the case where we assume that $L$ and $\tilde{L} $ are integrable Lagrangian distributions and we also assume that $\omega$ is closed. In this case $(\PS,\eta,K)$ is said to be {\bf para-K\"ahler}. Because of integrability of $K$, the Nijenhuis tensor vanishes. Furthermore, the Levi-Civita connection $\lc$ of the metric $\eta$, preserves $\eta, K$ and $\omega$, as we have seen earlier in Corollary \ref{wclosed}. Therefore, its projected torsion $\mathring{\tau}_\pr$ vanishes (as can be seen from \eqref{eq:torsionrelation}). 
We already know from Lemma \ref{FlatK} that the projected curvature tensor $\mathring{R}_\pr$ vanishes. This can also be seen from an explicit calculation; since $\lc$ preserves $K$, we have (using \eqref{eq:etaPPtilde})
\begin{equation}
\mathring{R}(Z,W,P(X),P(Y))=\eta(\hat{\mathring{R}}(Z,W)\PP(X),\PP(Y))=\eta(\PP(\hat{\mathring{R}}(Z,W)X),\PP(Y))= 0.
\end{equation} 
Therefore, the symmetry of the Riemann tensor, $\mathring{R}(X,Y,Z,W)=\mathring{R}(Z,W,X,Y)$, and the identity (\ref{eq:curvature}), imply that 
\be
\mathring{R}_\pr(X,Y,Z,W) = 0.
\ee 
This shows that if $(\PS,\eta,\omega)$ is a para-K\"ahler manifold, then $\Lb^{\mathring{\Dbr}}_XY$ with $\mathring{\Dbr}_X=\lc_{P(X)}$ is a generalised Lie derivative that satisfies the Jacobi identity for any $X,Y \in \Gamma(T\PS)$.

We now relax the condition that $\omega$ is closed and  we also relax the integrability conditions on $\tilde{L}$. 
As we will show, in order to construct a Lie derivative that satisfies the Jacobi identity, we only need $L$ to be integrable. In this case, $(\PS ,\eta, K)$ is {\bf $L$-para-Hermitian}, and because $\lc \omega$ no longer vanishes, the \gld associated to the Levi-Civita connection no longer satisfies the Jacobi identity as in the para-K\"ahler case. We therefore have to find a different connection with this property:
\begin{Def}\label{def:canonicalconnection}
Let $(P,\eta,K)$ be an almost para-Hermitian manifold and let $\lc$ be the Levi-Civita connection of $\eta$. Then the connection 
\begin{equation}
\n^c_X = \PP \lc_X \PP + \PPt \lc_X \PPt 
\label{eq:canonicalconnection}
\end{equation}
is called the \textbf{canonical para-Hermitian connection}.
\end{Def}
This canonical connection has all the desired properties. First, we show that it is compatible with the para-Hermitian geometry. The reason we call it ``canonical'' will be justified in the next section.
\begin{Lem}
The canonical connection $\n^c$ preserves $\eta$ and $K$, i.e. $\n^c\eta=\n^cK=0$. Consequently, it also preserves $\omega=\eta K$ and the canonical projectors $\PP$ and $\PPt$.
\end{Lem}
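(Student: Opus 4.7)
The proof is essentially a direct verification, exploiting the algebraic identities satisfied by the projectors $\PP,\PPt$ together with the fact that the Levi-Civita connection $\lc$ already preserves $\eta$. My plan proceeds in three steps.

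\emph{Step 1: Preservation of $K$.} I would first show $\n^c K=0$ by expanding $(\n^c_X K)(Y)=\n^c_X(K(Y))-K(\n^c_X Y)$ and using the decomposition $K=\PP-\PPt$. Because of the idempotency and orthogonality relations $\PP^2=\PP$, $\PPt^{\,2}=\PPt$, $\PP\PPt=\PPt\PP=0$, one obtains $\PP(K(Y))=\PP(Y)$ and $\PPt(K(Y))=-\PPt(Y)$. Substituting these into the definition of $\n^c$ gives $\n^c_X(K(Y))=\PP\lc_X(\PP(Y))-\PPt\lc_X(\PPt(Y))$, which is exactly $K(\n^c_X Y)$, proving $\n^c K=0$.

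\emph{Step 2: Preservation of $\eta$.} I would compute $\eta(\n^c_X Y,Z)+\eta(Y,\n^c_X Z)$ directly. Using the key property \eqref{eq:etaPPtilde}, namely $\eta(\PP(A),B)=\eta(A,\PPt(B))$, each occurrence of an outer projector can be moved onto the opposite argument. This yields
\begin{equation*}
\eta(\n^c_X Y,Z)+\eta(Y,\n^c_X Z)=\eta(\lc_X(\PP Y),\PPt Z)+\eta(\PPt Y,\lc_X(\PP Z))+\eta(\lc_X(\PPt Y),\PP Z)+\eta(\PP Y,\lc_X(\PPt Z)).
\end{equation*}
Pairing the terms and invoking $\lc\eta=0$ collapses this to $X[\eta(\PP Y,\PPt Z)+\eta(\PPt Y,\PP Z)]$. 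Finally, the Lagrangian/isotropy property $\eta(\PP\cdot,\PP\cdot)=\eta(\PPt\cdot,\PPt\cdot)=0$ implies $\eta(Y,Z)=\eta(\PP Y,\PPt Z)+\eta(\PPt Y,\PP Z)$, so the right-hand side is simply $X[\eta(Y,Z)]$, establishing $\n^c\eta=0$.

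\emph{Step 3: Consequences.} Once $\n^c\eta=0$ and $\n^c K=0$ are in hand, the remaining statements follow for free. Since $\omega=\eta K$ in the sense that $\omega(Y,Z)=\eta(K(Y),Z)$, the Leibniz rule for $\n^c$ applied to this product gives $\n^c\omega=0$. Likewise, any connection preserves the identity endomorphism, so $\n^c\PP=\tfrac12\n^c(\id+K)=0$ and $\n^c\PPt=\tfrac12\n^c(\id-K)=0$.

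I do not anticipate a real obstacle here: the content is entirely algebraic once one has the projector identities and \eqref{eq:etaPPtilde}. The only point requiring mild care is keeping track, in Step 2, of which projector sits on which slot when applying \eqref{eq:etaPPtilde}; that is why assembling the four cross-terms and recognising them as $X[\eta(Y,Z)]$ via isotropy of $L$ and $\tilde L$ is the single substantive computation of the proof.
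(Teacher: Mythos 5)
Your proposal is correct and follows essentially the same route as the paper: $\eta$-compatibility is verified directly by moving projectors across $\eta$ via \eqref{eq:etaPPtilde} and invoking metricity of $\lc$, while $K$-compatibility follows from the projector algebra with $K=\PP-\PPt$ (which the paper dismisses as trivial and you spell out). The derivation of $\n^c\omega=0$ and $\n^c\PP=\n^c\PPt=0$ as consequences matches the paper's statement as well.
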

\begin{proof}
The compatibility of $\n^c$ with $\eta$ can be checked directly
\bea
\eta(\n_X^c Y,Z) + \eta( Y,\n_X^c Z)
&=& \eta(\lc_X \PP(Y),\PPt(Z)) + \eta( \PP(Y),\lc_X \PPt(Z))\cr
&+& \eta(\lc_X \PPt(Y),\PP(Z)) + \eta( \PPt(Y),\lc_X \PP(Z))\\
&=& X[\eta(\PP(Y),\PPt(Z)) + \eta(\PPt(Y), \PP(Z))]
= X [\eta(Y,Z)].\nonumber
\eea
The compatibility with $K$ given by $K\n^c =\n^c K$ follows trivially from the definition (\ref{eq:canonicalconnection}) and the properties of projectors $\PP,\PPt$ with $K= \PP-\PPt$.
This establishes that the connection is para-Hermitian.
\end{proof}

\begin{Rem}
In \cite{Ivanov:2003ze}, a class of para-Hermitian connections called  ``canonical'' is given. These connections are parametrized by a real parameter $t$ and according to this terminology, the connection \eqref{eq:canonicalconnection} is a canonical para-Hermitian connection with $t=0$. The canonical connection $\n^c$ can also be expressed using its contorsion tensor \eqref{eq:contorsion}, which is given by the derivative of the almost symplectic form $\omega$
\be\label{contc}
\eta(\n_X^c Y,Z)= \eta(\lc_XY,Z) - \frac12  \lc_X\omega(Y,K(Z)). 
\ee
\end{Rem}

Now we will show that the \gld associated to the canonical connection indeed satisfies the Jacobi identity:
\begin{Thm}
\label{lem:canonical}
Let $(\PS,\eta,K)$ be a $L$-para-Hermitian manifold. Then the generalized Lie derivative associated to the canonical connection $\n^c$,
\begin{equation}
\Lb^c_XY := \Lb_X^{\Dbr^c}Y = \Dbr^c_XY - \Dbr^c_YX + \theta_{\Dbr^c}(Y,X),
\label{eq:canonicalderivative}
\end{equation}
where $\Dbr^c_XY = \n^c_{\PP(X)}Y$, satisfies the Jacobi identity: $J^c=0$. We will call $\Lb^c$ the \textbf{canonical generalized Lie derivative}.
\end{Thm}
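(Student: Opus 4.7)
The plan is to evaluate the explicit Jacobiator formula \eqref{eq:jacobiator} on $\n^c$ and show that each of its three groups of terms---twist, torsion, curvature---vanishes, using only that $\n^c$ preserves $K$ (and hence $\eta$, $\omega$, $\PP$, $\PPt$) together with the $L$-integrability hypothesis. The twist contributions are immediate: as noted directly after \eqref{eq:jacobiator}, $\theta_{\Dbr^c}\in\se(\Lt)$, and $\Lt$ being Lagrangian forces every pairing $\eta(\theta_{\Dbr^c}(\cdot,\cdot),\theta_{\Dbr^c}(\cdot,\cdot))$ to vanish.

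For the six torsion terms I will prove the stronger statement $\tau_\pr=0$. By \eqref{eq:torsionrelation}, $\tau_\pr(X,Y)$ splits into the ordinary torsion $T(\PP(X),\PP(Y))$, a $DK$-piece, and $N_\pr$. The $DK$-piece vanishes because $\n^c K=0$, and $N_\pr$ vanishes by $L$-integrability. For the remaining torsion piece, the contorsion relation \eqref{contc} expresses $\eta(T(\PP(X),\PP(Y)),Z)$ in terms of $\lc\omega$; pairing against $Z=\PPt(Z)$ produces, up to sign, the combination $\lc_{\PP(X)}\omega(\PP(Y),\PPt(Z))-\lc_{\PP(Y)}\omega(\PP(X),\PPt(Z))$, which is killed by \eqref{eq:omegaPtP} of Lemma \ref{firstlemma}, while pairing against $Z\in L$ is zero because $T(\PP(X),\PP(Y))\in L$ and $L$ is Lagrangian. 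Hence $\tau_\pr=0$ and the six torsion terms in \eqref{eq:jacobiator} all drop out.

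The curvature sum is the main obstacle. Because $\n^c K=0$, the endomorphism $\hat R(\PP(X),\PP(Y))$ commutes with $K$ and therefore preserves $L\oplus\Lt$; combined with the Lagrangian property, this forces the decomposition $R_\pr(X,Y,Z,W)=R(x,y,z,\tilde w)-R(x,y,w,\tilde z)$, where lowercase (resp.\ tilde) letters denote $\PP$-images (resp.\ $\PPt$-images). I then apply the first algebraic Bianchi identity for $\n^c$, whose torsion correction reads $\sum_{\mathrm{cyc}(A,B,C)}\bigl[\hat T(\hat T(A,B),C)+(\n^c_A\hat T)(B,C)\bigr]$, on three $L$-arguments: the first correction vanishes because $\hat T|_{L\times L}=\tau_\pr=0$, and the second vanishes since $\n^c$ maps $L$-valued vectors along $L$-directions back into $L$, keeping the inner $\hat T$ on $L\times L$. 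One thus recovers the torsion-free identity $\sum_{\mathrm{cyc}(a,b,c)}\hat R(a,b)c=0$ on $L$. Substituting the decomposition of $R_\pr$ into the curvature block of \eqref{eq:jacobiator}, the three all-$L$ terms carrying a $\tilde w$-slot cancel outright by this Bianchi identity; the remaining nine terms each carry exactly one $\Lt$-slot and regroup into three triples according to whether that slot sits in the $X$-, $Y$-, or $Z$-position. Within each triple, one further application of the $L$-Bianchi identity together with the antisymmetries of $R$ in its first two and in its last two arguments produces exact cancellation, giving $J^c=0$.
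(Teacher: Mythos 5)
Your proposal is correct, and while the twist and torsion steps end up where the paper's proof does, your treatment of the curvature block is genuinely different. For the torsion you route through \eqref{eq:torsionrelation} and the contorsion formula \eqref{contc} together with \eqref{eq:omegaPtP}, obtaining the full statement $\hat{T}^c(\PP(X),\PP(Y))=0$ (note that your claim $T(\PP(X),\PP(Y))\in L$ is exactly where $L$-integrability enters, via $[\PP(X),\PP(Y)]\subset L$); the paper gets $\tau^c_\pr=\PP\,\mathring{T}(\PP(X),\PP(Y))=0$ in one line from $\n^c_{\PP(X)}\PP(Y)=\PP\lc_{\PP(X)}\PP(Y)$, but your stronger version is precisely what your later Bianchi argument needs, so the extra work is not wasted. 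For the curvature terms the paper translates $R^c_\pr$ into the Levi-Civita curvature with mixed projections, \eqref{eq:lemmaproofcurvatures}, and then cancels using the pair symmetry and the torsion-free first Bianchi identity of $\mathring{R}$; you instead stay with $\n^c$ itself: $\n^cK=0$ gives the block decomposition $R_\pr(X,Y,Z,W)=R(x,y,z,\tilde w)-R(x,y,w,\tilde z)$ (using \eqref{eq:curvature}, again via integrability), and the vanishing and $\n^c$-parallelism of the $L\times L$ torsion yield the $L$-restricted algebraic Bianchi identity for $\n^c$, after which the twelve terms cancel using only the antisymmetries in the first and last pairs of slots --- I checked the bookkeeping of the three triples and it closes. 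What each approach buys: the paper's route is shorter once \eqref{eq:lemmaproofcurvatures} is granted and keeps the Levi-Civita connection in view (matching the para-K\"ahler discussion), whereas yours never converts to $\mathring{R}$ at all, so it avoids relying on that translation and, appropriately, never invokes the pair symmetry $R(X,Y,Z,W)=R(Z,W,X,Y)$, which is not available for a metric connection with torsion such as $\n^c$; the price is the torsion-corrected first Bianchi identity, which you state and apply correctly.
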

\begin{proof}
To show that the Jacobiator $J^c$ vanishes, we will use the expression \eqref{eq:jacobiator} and show that all the individual terms vanish. As was discussed earlier, the last two lines vanish identically. We now evaluate the projected torsion $\tau_\pr^c$ of $\n^c$. Since the canonical connection preserves $K$, it is simply related to $\mathring{T}$, the  ordinary torsion of the Levi-Civita connection $\lc$, which vanishes 
\bea
\tau^c_\pr(X,Y) &=& \PP([\Dbr^c_XY-\Dbr^c_YX]-[\PP(X),\PP(Y)])\cr
&=& \PP(\lc_{\PP(X)} \PP(Y) - \lc_{\PP(Y)} \PP(X) - [\PP(X),\PP(Y)])\cr
&=&  \PP (\mathring{T}(P(X),P(Y))=0.
\eea
Next, let us evaluate the first two lines in \eqref{eq:jacobiator} for the canonical connection $\n^c$. Once more using \eqref{eq:etaPPtilde} along with the property $\n^c_X\PP(Y)=\PP\lc_X\PP(Y)$, the projected curvature of $\n^c$, $R_\pr^c$, can be written in terms of the curvature $\mathring{R}$ of $\lc$ (using \eqref{eq:curvature}) as
\begin{equation}\label{eq:lemmaproofcurvatures}
\begin{aligned}
R_\pr^c(X,Y,Z,W) &= \mathring{R}_\pr(X,Y,\PP(Z),\PPt(W)) + \mathring{R}_\pr(X,Y,\PPt(Z),\PP(W)) \\
	&= \mathring{R}(\PP(X),\PP(Y),\PP(Z),\PPt(W)) + \mathring{R}(\PP(X),\PP(Y),\PPt(Z),\PP(W)) \, .
\end{aligned}
\end{equation}
In the expression \eqref{eq:jacobiator}, the first two lines involving $R_\pr$ can be written as a cyclic sum over $(X,Y,Z)$. Since all the remaining parts of $J^c$ vanish as shown above, we have 
\begin{equation}
J^c(X,Y,Z,W)=\cyc [R_\pr^c(X,Y,Z,W) - R_\pr^c(W,Z,X,Y)].
\end{equation}
Rewriting both above terms using \eqref{eq:lemmaproofcurvatures}, we can use the symmetries of $\mathring{R}$ to rewrite $\mathring{R}(W,Z,X,Y)=-\mathring{R}(X,Y,Z,W)$ and apply the Bianchi identity,
\begin{equation}
\cyc \mathring{R}(X,Y,Z,W)=0,
\end{equation}
to cancel all the terms and arrive at the result $J^c=0$.
%
\end{proof}

\section{Relationship with Generalized Geometry}

In this section we give the precise relationship between the para-Hermitian geometrical framework, described in previous sections, and generalised geometry. Given an $L$-para-Hermitian manifold $\PS$ we design an isomorphism between $T\PS$ and the generalised bundle $\TT{\cal F}$, where $T\mathcal{F}=L$. Under this isomorphism we show that the Dorfman bracket of generalized geometry \eqref{eq:genLieGG} is  mapped onto the canonical \gld \eqref{eq:canonicalderivative}. As a corollary, this gives another proof of our main theorem \ref{lem:canonical}: the fact that the canonical \gld satisfies the Jacobi identity follows now from the well known fact that the Dorfman bracket satisfies the Jacobi identity. More details on this construction, in particular the underlying Lie algebroid structures, will be discussed in \cite{inprogress}.

Let $(\PS,\eta,\omega)$ be a $L$-para-Hermitian manifold such that $T\PS=L\oplus \Lt$. 
This means that  $L$ is  integrable; let $\mathcal{F}$ be the corresponding foliation of $\PS$ such that $T{\cal F}= L$, i.e. $\cal F$ represents the partition of $\PS$ as a set of leaves. If  $\PS$ is a $2d$-dimensional manifold, $\cal F$ is a $d$-dimensional manifold which is bijectively identified with  $\PS$ \cite{Milnor}. Given a point $p \in \PS$ we denote by $M_p\subset \mathcal{F}$ the leaf passing through $p$. The foliation $\cal F$ is simply $\coprod_{[p]} M_p$ where the index space is the leaf space\footnote{Here $L$ is viewed as an infinitesimal diffeomorphism and the quotient refers to the quotient of $\PS$ by its exponentiated action. 
The leaf space  can also be defined as a quotient  $\PS/\sim_{\cal F}$ where $\sim_{\cal F}$ is the equivalence relation given by $p\sim_{\cal F} q$ if there exist a path $\gamma \in \mathcal{F}$, i-e a path within one leaf of $\mathcal{F}$,  connecting $p$ and $q$.} $\PS/L$.
Let us recall  that by definition we can choose local coordinates $(y,\ty)$ on an open set $U$ such that the intersection of any leaf with the open set $U$ is characterised by the condition $\ty=\mathrm{const.}$\cite{Molino}. In these coordinates, the foliation preserving diffeomorphisms are invertible $C^\infty$ maps $(y,\ty)\mapsto (\Phi(y,\ty),\tilde{\Phi}(\ty) )$, mapping leaves onto leaves.

Given a foliation $\cal F$ viewed as a $d$-dimensional manifold, we can define its tangent and cotangent bundles $T\cal F$ and $T^*\cal F$ and we also introduce  the generalised tangent bundle
$\TT {\cal F} := T{\cal F} \oplus T^*{\cal F}$. If this bundle is restricted to any leaf $M_p$, it gives the usual generalised bundle with base $M_p$: $\TT {\cal F}|_{M_p}= \TT M_p$.
In local coordinates the elements of $\TT{\cal F}$ are given by elements of the form $(x,\alpha)= (x^a(y,\ty)\pa_a + \alpha_a(y,\ty )\rd y^a)$.
There is an isomorphism between the rank $2d$ bundles $\TT {\cal F}$ and  $T\PS$, given by
\begin{align}\label{eq:bundle_iso_eta}
\begin{aligned}
\rho:\TT {\cal F}&\rightarrow T\PS\\
(x,\ap)&\mapsto  x+\etah^{-1}(\ap) \coloneqq X \, .
\end{aligned}
\end{align}
This isomorphism defines an $L$-para-Hermitian structure on $T\PS$, such that $\rho(T\mathcal{F})= L$ and $\rho(T^*\mathcal{F}) =\tilde{L}$. The metric $\eta$ and two-form $\omega$ correspond to the duality pairings 
\be
\eta(\rho(x,\alpha), \rho(y,\beta)) = \imath_x \beta +\imath_y\alpha,
\quad
\omega(\rho(x,\alpha), \rho(y,\beta)) = \imath_x \beta -\imath_y\alpha. 
\ee 
The inverse of $\rho$ is given by 
$
 \rho^{-1}(X) = ( P(X), \etah(\tilde{P}(X)) ).
$

Now that we have an isomorphism  $\rho$ between $\TT {\cal F}$ and $T\PS$ it is possible to pull-back the canonical Lie derivative \eqref{eq:canonicalderivative}  defined in the previous section onto $\TT{\cal F}$. 
 That is we define
 \be
 \rho^*(\Lb^c)_{(x,\ap)}(y,\bt):= \Lb^c_{\rho(x,\ap)}\rho(y,\bt).
 \ee
The Dorfman derivative \eqref{eq:genLieGG} on $\TT{\cal F}$ can be expressed using Cartan's formula as 
\begin{align} 
\mathbb{L}_{(x,\ap)}(y,\bt) = ([x,y],\Lie_x\bt-\Lie_y\ap+\rd [\ap(y)]).
\label{eq:DorfmanDerivative}
\end{align}
Here, the Lie bracket, Lie derivative and the differential are defined on the $d$-dimensional foliation $\cal F$. The relationship between these two Lie derivatives is contained in the following proposition:
\begin{Prop}
Let $(\PS,\eta,\omega)$ be a $L$-para-Hermitian manifold, $\mathcal{F}$ denote the integral foliation of $L$ and  $\rho:\TT{\cal F} \to T\PS$ the isomorphism defined in \eqref{eq:bundle_iso_eta}. Then $\rho$ is a morphism between the Dorfman bracket on $\TT\cal{F}$ and the canonical \gld on $T\PS$, i.e.
\begin{equation}
\rho[\mathbb{L}_{(x,\ap)}(y,\bt)]=\Lb^c_{\rho(x,\ap)}\rho(y,\bt).
\end{equation}
\end{Prop}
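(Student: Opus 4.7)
The plan is to verify the identity separately on the $L$ and $\Lt$ summands of $T\PS = L \oplus \Lt$. Writing $X = x + \tl{x}$ and $Y = y + \tl{y}$ with $\tl{x} := \etah^{-1}(\ap) \in \Lt$ and $\tl{y}:= \etah^{-1}(\bt) \in \Lt$, so that $\PP(X)=x$, $\PPt(X)=\tl{x}$, the claim reduces to
\be
\Lb^c_X Y = [x,y] + \etah^{-1}(\Lie_x\bt - \iota_y \rd \ap).
\ee

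For the $L$-component, I apply $\PP$ to $\Lb^c_X Y = \Dbr^c_X Y - \Dbr^c_Y X + \theta_{\Dbr^c}(Y,X)$. The twist lies in $\Lt$ and is killed by $\PP$, while $\n^c = \PP\lc\PP + \PPt\lc\PPt$ preserves the splitting, giving $\PP(\Dbr^c_X Y) = \PP\lc_x y$ and $\PP(\Dbr^c_Y X) = \PP\lc_y x$. Torsion-freeness of $\lc$ then yields $\PP(\Lb^c_X Y) = \PP[x,y]$, and integrability of $L$ forces $[x,y]\in L$, so $\PP[x,y]=[x,y]$. This reproduces the first entry of the Dorfman bracket.

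For the $\Lt$-component, I pair $\Lb^c_X Y$ with a test vector $z\in L$: since $\etah$ identifies $\Lt$ with $L^*$, it is enough to show $\eta(\Lb^c_X Y, z) = (\Lie_x \bt - \iota_y \rd \ap)(z)$. Using that $\n^c$ preserves the projectors, the isotropy of $L$ and $\Lt$ collapses each summand to a single cross-pairing: $\eta(\Dbr^c_X Y, z) = \eta(\lc_x \tl y, z)$, $\eta(\Dbr^c_Y X, z) = \eta(\lc_y \tl x, z)$, and $\eta(\theta_{\Dbr^c}(Y,X), z) = \eta(Y, \n^c_z X) = \eta(y, \lc_z \tl x) + \eta(\tl y, \lc_z x)$. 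Applying the Leibniz rule for $\lc$ where it is needed, together with the identifications $\eta(\tl x, v) = \ap(v)$ and $\eta(\tl y, v) = \bt(v)$ for $v\in L$, reduces $\eta(\Lb^c_X Y, z)$ to
\be
x[\bt(z)] - y[\ap(z)] + z[\ap(y)] + \ap(\PP\lc_y z - \PP\lc_z y) - \bt(\PP\lc_x z - \PP\lc_z x).
\ee
Integrability of $L$ identifies $\PP\lc_y z - \PP\lc_z y = [y,z]$ and $\PP\lc_x z - \PP\lc_z x = [x,z]$, so the expression collapses to $x[\bt(z)] - \bt([x,z]) - y[\ap(z)] + \ap([y,z]) + z[\ap(y)]$. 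By Cartan's formula this equals $(\Lie_x\bt - \Lie_y\ap + \rd[\ap(y)])(z) = (\Lie_x\bt - \iota_y \rd\ap)(z)$, matching the one-form part of the Dorfman derivative in~\eqref{eq:DorfmanDerivative}.

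The main obstacle is the asymmetric treatment of the twist term: the cross-pairing $\eta(y, \lc_z \tl x)$ must be expanded by Leibniz to give $z[\ap(y)] - \ap(\PP\lc_z y)$, whereas $\eta(\tl y, \lc_z x)$ reduces immediately by isotropy of $\Lt$ to $\bt(\PP\lc_z x)$ with no Leibniz step. Getting these signs right is what makes the surviving $\PP\lc$ combinations antisymmetric in the correct vector-field pairs; once the asymmetry is handled, integrability of $L$ converts them into Lie brackets on the leaves of $\mathcal{F}$ and Cartan's formula completes the identification.
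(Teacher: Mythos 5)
Your proposal is correct and follows essentially the same route as the paper's proof: a direct computation that collapses cross-pairings via isotropy and \eqref{eq:etaPPtilde}, then uses metricity and torsion-freeness of $\lc$, integrability of $L$, and Cartan's formula to match the canonical derivative with the Dorfman derivative. The only cosmetic difference is organisational — you verify the $L$- and $\Lt$-components separately (testing the $\Lt$-part against $z\in\Gamma(L)$) and go from $\Lb^c$ to $\mathbb{L}$, whereas the paper contracts once against an arbitrary $Z\in\Gamma(T\PS)$ and goes in the opposite direction; note only that identifying $\PP(\lc_y z-\lc_z y)$ with $[y,z]$ in your $\Lt$-step uses torsion-freeness as well as integrability, both of which you already invoke elsewhere.
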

This proposition retrospectively justifies the denomination ``canonical'' for the connection entering the definition of $\Lb^c$; it is precisely the connection that reproduces the Dorfman bracket on $\TT\mathcal{F}$ via the natural isomorphism \eqref{eq:bundle_iso_eta}.
\begin{proof}
Under the isomorphism $\rho$, the Dorfman bracket \eqref{eq:DorfmanDerivative} translates to a \gld  on $T\PS=L\oplus  \Lt$. For simplicity and given vectors $X,Y \in T{\cal F}$, we introduce the notations $P(X)=x$ and $P(X)=\tx$ and similarly for $Y$. We define
\be
\Lb^\rho_XY:= \rho[\mathbb{L}_{\rho^{-1}(X)}\rho^{-1}(Y)].
\ee
This can be explicitely written as 
\begin{align}\label{eq:dorfman_LL}
\Lb^\rho_XY
=[x,y]+\etah^{-1}(\Lie_x\hat\eta(\yt)-\Lie_y\hat\eta(\xt)+\rd [\eta(\xt,y)]).
\end{align}
Notice that because $L$ is integrable, $[x,y]\subset L$. The differential $\rd$ is the horizontal differential along $\cal F$ and the corresponding Lie derivative is given by the Cartan formula $\Lie_x=\imath_x \rd+\rd \imath_x$. The differential $\rd$ can be viewed as the Lie algebroid differential corresponding to the anchor $a: \TT{\cal F} \to T{\cal F}=L$.

Denoting the pairing  $\imath_x\alpha$ between forms and vector as $\alpha(x)$, using that $\rd f(x)=x[f]$ and that $x[\alpha(y)]= [{\cal L}_x \alpha](y)+ \alpha([x,y])$, we can rewrite \eqref{eq:dorfman_LL} as 
\begin{align}
\begin{aligned}
\eta(\Lb^\rho_XY,Z) &= \eta(\tilde{z},[x,y]) +[\Lie_x\hat\eta(\yt)](z)-[\Lie_y\hat\eta(\xt)](z)+ z[\eta(\xt,y)]\cr
&= x[\eta(\yt,z)]-y[\eta(\xt,z)] +z[\eta(\xt,y)]\\
&+ \eta(\xt,[y,z])- \eta(\yt,[x,z])+ \eta(\zt,[x,y]).
\end{aligned}
\end{align}
We can rewrite this identity in terms of the projections 
 $L$ and $\Lt$. Using in particular that $\eta(\tilde{x},y) =\eta(\tilde{P}(X), P(Y))$ we have 
\begin{equation}\label{eq:L_canonical_contracted}
\begin{aligned}
\eta(\Lb^\rho_XY,Z) &= \PP(X)\eta(\PPt(Y),\PP(Z)) 
	- \PP(Y)\eta(\PPt(X),\PP(Z)) +  \PP(Z)\eta(\PPt(X),\PP(Y)) \\
	&+ \eta(\PPt(X),[P(Y),P(Z)]) - \eta(\PPt(Y),[\PP(X),\PP(Z)])
		+ \eta(\PPt(Z),[\PP(X),\PP(Y)]) \, .
\end{aligned}
\end{equation}
Now, using the property that both $L$ and $\Lt$ are isotropic (see \eqref{eq:etaPPtilde}), we have 
\begin{equation}
\begin{aligned}
\eta(\PPt(Z),[\PP(X),\PP(Y)])
	&=\eta(Z,\PP([\PP(X),\PP(Y)])) \\
	&=\eta(Z,\PP(\lc_{\PP(X)}\PP(Y)-\lc_{\PP(Y)}\PP(X)))
\end{aligned}
\end{equation}
where we also used the fact that the Levi-Civita connection is torsionless. If we further use its metricity, we get
\begin{equation}
\PP(X)\eta(\PPt(Y),\PP(Z))=\eta(\PPt(\lc_{\PP(X)}\PPt(Y)),Z)+\eta(Y,\PP\lc_{\PP(X)}\PP(Z)).
\end{equation}
Repeating a similar procedure with the remaining terms in \eqref{eq:L_canonical_contracted} and noticing various cancellations, we arrive at the remarkably simple expression
\begin{equation}\label{eq:gld_proof}
\begin{aligned}
\eta(\Lb^\rho_XY,Z) = \eta(\n^c_{\PP(X)}Y-\n^c_{\PP(Y)}X,Z)+\eta(\n^c_{\PP(Z)}X,Y)=\eta(\Lb^c_XY,Z) .
\end{aligned}
\end{equation}
where $\n^c$ is the canonical connection from Proposition \ref{lem:canonical}. This finally proves our main proposition. 

\end{proof}

\section{Conclusion}
In this work we have studied the relation between DFT and GG in the context of para-Hermitian geometry, we have generalised the Dorfman derivative using the canonical connection and showed that it satisfies the Jacobi identity in the case the geometry is $L$-para-Hermitian. We have also shown its equivalence with the generalised Lie derivative associated with the corresponding foliation. This opens up the possibility to extend DFT to the case where $\eta$ is curved.
 
Our works opens up several new avenues. First, we have not included the presence of fluxes in our description of generalised geometry and it is a natural question to wonder how these would enter in the para-Hermitian context. A related question is to explore whether the $L$-integrability is necessary for the construction of the Dorfman derivative, or whether it can be relaxed modulo extra conditions on the admissible vectors. Also, it is natural to wonder whether the geometrical extension we have presented here can be generalised to the exceptionally extended generalised geometries associated with flux compactifications and Exceptional Field Theory. 

Finally, so far we have only described the generalisation of the kinematics of DFT; it is of utmost importance to understand the interplay between the para-Hermitian geometrical setting and the dynamical metric $\HH$ that enters both the string action and the definition of Born geometry. We hope to come back to these issues in the near future.

\section*{Acknowledgments} 
 L.F. and D.S. would like to greatly thank Ruxandra Moraru and Shengda Hu for the continuous mathematical support and endless discussions around the theme of our paper. L.F. would like to thank collaborators Djordje  Minic and Rob Leigh for an ongoing collaborations and many discussions on related subjects which have been truly inspirational. F.J.R. would like to thank the participants of the workshops in Banff (January 2017) and 
Zagreb (June 2017), especially David Berman, Chris Hull, Dan Waldram, Jeong-Hyuck Park, Chris Blair and Charles Strickland-Constable, for many sharp questions that motivated us to present this work in a better form, and additionally Emanuel Malek for a great many conversations, discussions and clarifications. D.S. would like to thank Dylan Butson for additional mathematical guidance and consultations both related and unrelated to the presented matter.

F.J.R. would like to thank the Perimeter Institute where part of this work was carried out for its kind hospitality and support. While a resident at Perimeter Institute, F.J.R. was a PhD student at Queen Mary University of London supported by an STFC studentship. The work of F.J.R. is now supported by DFG grant TRR33 ``The Dark Universe''.
D.S. is currently a PhD student at Perimeter institute and University of Waterloo. His research is supported by NSERC Discovery Grants 378721.

This research was supported in part by Perimeter Institute for Theoretical Physics. Research at Perimeter Institute is supported by the Government of Canada through Innovation, Science and Economic Development Canada and by the Province of Ontario through the Ministry of Research, Innovation and Science.

\appendix
\section{Properties of para-Hermitian connections}
\label{sec:connections_apndx}
\paragraph{Proof of Lemma \ref{firstlemma}}
For the first part of the lemma we use that 
\bea
\lc_X\omega(Y,Z)&=&
X[\omega(Y,Z)] - \omega(\lc_X Y ,Z)-\omega(Y,\lc_XZ)\cr
&=& X[\eta(K(Y),Z)] - \eta(K(\lc_X Y),Z)-\eta(K(Y),\lc_XZ)\cr
&=& \eta(\lc_X K(Y), Z) - \eta(K(\lc_X Y),Z)\cr\label{etaKYZ1}
&=& \eta(\lc_X K(Y), Z) + \eta(\lc_X Y,K(Z)).
\eea
where we have used the definition of the covariant derivative, the definition of $\omega(X,Y)=\eta (K(X),Y)$, the fact that $\lc$ preserves $\eta$ and the compatibility condition $ \eta (K(X),Y)= -\eta (X,K(Y))$.
The last expression clearly vanishes if  $K(Y)=Y$ and $K(Z)=-Z$ which is the case if $Y=P(Y)$ and $Z=\tilde{P}(Z)$. This  proves (\ref{eq:omegaPtP}). Then,
by expanding (\ref{etaKYZ1}) we get 
\bea
\lc_X\omega(Y,Z)
&=& \eta((\lc_X K)(Y), Z) + \eta(K(\lc_XY),Z) + \eta(\lc_X Y,K(Z)) \cr
&=&  \eta((\lc_X K)(Y), Z),
\eea
which proves (\ref{omegaK}). 

By definition the exterior derivative of a two-form is given by 
\bea
\rd\omega(X,Y,Z)&=& 
\sum_{(X,Y,Z)}(X[\omega(Y,Z)] - \omega([X,Y],Z) ).
\eea
Given a connection $\n$ and using the definition of the covariant derivative acting on $\omega$ this can be written as 
\bea
\rd\omega(X,Y,Z)&=& 
\sum_{(X,Y,Z)}\big[\n_X\omega(Y,Z) + \omega(\n_X Y, Z) +\omega(Y, \n_X Z) - \omega([X,Y],Z) \big]\cr
&=& 
\sum_{(X,Y,Z)}\big[\n_X\omega(Y,Z) + \omega(\hat{T}(X, Y), Z)  \big],\label{domega}
\eea
where we have used the definition of the torsion tensor 
\be
\hat{T}(X, Y) = \n_X Y- \n_Y X - [X,Y].
\ee
For the Levi-Civita connection $\lc$ the torsion vanishes and the lemma follows.

\paragraph{Proof of Lemma \ref{lemma:levicivita_cont}}

Given a para-Hermitian connection we evaluate
 \bea
 \n_X \eta (Y,Z) &=& X[\eta (Y,Z)] - \eta(\n_XY,Z) - \eta(Y,\n_X Z).
 \eea
 Subtracting a similar expression for $\lc_X\eta (Y,Z)$ and imposing that 
 $\n_X\eta=\lc_X\eta=0$ we get that ${\Omega}(X,Y,Z)+{\Omega}(X,Z,Y)=0$.
 Given a para-Hermitian connection we evaluate
 \bea
 \n_X \omega (Y,Z) &=& X[\omega (Y,Z)] - \omega(\n_XY,Z) - \omega(Y,\n_X Z)\cr
 &=&X[\omega (Y,Z)] + \eta(\n_XY,K(Z)) - \eta(K(Y),\n_X Z),
 \eea
 where we have used $\omega(X,Y)=\eta(K(X),Y)=-\eta(X,K(Y))$.  Subtracting a similar expression for $\lc_X\omega (Y,Z)$ and using that 
  $\n_X \omega (Y,Z)=0$ we obtain
  \bea
   \lc_X \omega (Y,Z)
   &=&  {\Omega}(X,Y,K(Z)) - \Omega(X,Z, K(Y)).
  \eea
Now inserting $\PP(Y)$ or $\PPt(Y)$ for $Y$ and similarly for $Z$ gives the result of the lemma since $K\PP=+\PP$ while $K\PPt = -\PPt$. It also gives an alternative proof of \eqref{eq:omegaPtP}.

\section{The Jacobiator}
\label{sec:jacobiator}
In this appendix we fill in some of the details for computing the Jacobiator of the generalised Lie derivative for a projected derivative $\Dbr_X:=\n_{\PP(X)}$. Recall that $\n$ is any metric compatible connection and $P: T\PS \to T\PS$ is a projector but the calculation presented here generalises easily to the case where $\PP$ is an arbitrary anchor map. The image of $\PP$ is denoted by $L$. The complementary projector $\PPt=\id-\PP$ has image $\tilde{L}$ and since $\PP\PPt=0$ it is clear that $\tilde{L}=\Ker\PP$. Therefore if $X\in \im\PP$ then $\PP(X)=X$ and $\PPt(X)=0$. Similarly if $X\in\Ker\PP$ then $\PP(X)=0$ and $\PPt(X)=X$.

\subsection{Computing $J^\Dbr$}
From Definition \ref{def:genLie} we have the projected Dorfman derivative (recalling that $\theta_\Dbr(Y,X)=\eta(Y,\Dbr X)$)
\begin{equation}
\Lb^\Dbr_XY := \Dbr_{X}Y - \Dbr_{Y}X + \theta_\Dbr(Y, X)
\end{equation}
which satisfies the Leibniz rule, is compatible with $\eta$ and obeys a normalisation condition (cf. \eqref{eq:genLieproperties}). Under contraction with another vector this reads
\begin{equation}
\eta(\Lb^\Dbr_XY,Z) = \eta(\Dbr_{X}Y,Z) - \eta(\Dbr_{Y}X,Z) + \eta(Y,\Dbr_ZX)
\end{equation}
We also recall the Jacobiator from Definition \ref{def:jacobiator} 
\begin{equation}
J^\Dbr(X,Y,Z,W) := \eta([\Lb^\Dbr_X,\Lb^\Dbr_Y]Z - \Lb^\Dbr_{\Lb^\Dbr_XY}Z,W) \, .
\end{equation}
We now want to expand this expression to check it is indeed tensorial and analyse the various contributions. We start by using the compatibility property to consider the term
\begin{align}
\eta(\Lb_X^\Dbr \Lb_Y^\Dbr Z,W) &= \PP(X)[\eta(\Lb^\Dbr_Y Z,W)] - \eta(\Lb^\Dbr_Y Z,\Lb^\Dbr_X W) \notag\\
	&= \eta(\Dbr_X\Dbr_YZ,W) - \eta(\Dbr_X\Dbr_ZY,W) 
		+ \eta(Z,\Dbr_X\Dbr_WY) \notag\\
	& + \eta(\Dbr_XZ,\Dbr_WY) + \eta(\Dbr_YZ,\Dbr_WX) - \eta(\Dbr_ZY, \Dbr_WX) \notag\\
	& - \eta(Z,\Dbr_{[\Dbr_X W - \Dbr_W X]}Y) - \eta(W,\Dbr_{[\Dbr_Y Z - \Dbr_Z Y]}X) \notag\\
	& - \eta(\eta(Z,\Db Y), \eta(W,\Db X) ) \, ,
\end{align}
where we also used $\PP(X)[\eta(\Dbr_YZ,W)]=\eta(\Dbr_X\Dbr_Y Z,W) + \eta(\Dbr_YZ,\Dbr_X W)$. Now subtracting the same expression with $X$ and $Y$ exchanged yields
\begin{align}
\eta([\Lb_X^\Dbr,\Lb_Y^\Dbr]Z,W) &= \eta([\Dbr_X,\Dbr_Y]Z,W) 
		- \eta(\Dbr_ZY, \Dbr_WX) + \eta(\Dbr_ZX, \Dbr_WY)\notag\\
	&	- \eta(\Dbr_X\Dbr_ZY - \Dbr_Y\Dbr_ZX,W) + \eta(Z,\Dbr_X\Dbr_WY - \Dbr_Y\Dbr_WX) \notag\\
	& - \eta(Z,\Dbr_{[\Dbr_X W - \Dbr_W X]}Y - \Dbr_{[\Dbr_Y W - \Dbr_W Y]}X) \notag\\
	&	- \eta(W,\Dbr_{[\Dbr_Y Z - \Dbr_Z Y]}X - \Dbr_{[\Dbr_X Z - \Dbr_Z X]}Y) \notag\\
	& - \eta(\eta(Z,\Db Y), \eta(W,\Db X) ) + \eta(\eta(Z,\Db X), \eta(W,\Db Y) ) \, .
\end{align}
On the other hand -- noting that $\eta(\Dbr_UZ,W)=\eta(\eta(W,\Db Z),U)$ -- we find
\begin{align}
\eta(\Lb^\Dbr_{\Lb^\Dbr_XY}Z,W) &= \eta(\Dbr_{\Lb_X^\Dbr Y} Z - \Dbr_Z \Lb_X^\Dbr Y,W) 
		+ \eta(Z,\Dbr_W \Lb_X^\Dbr Y) \notag\\
	&= \eta(\Dbr_{[\Dbr_XY-\Dbr_YX]}Z,W) + \eta(\eta(W, \Db Z), \eta(Y,\Db X)) \notag\\
	&	- \eta(\Dbr_Z \Dbr_XY,W) + \eta(\Dbr_Z \Dbr_YX,W) - \eta(Y,\Dbr_Z\Dbr_W X) \notag\\
	& + \eta(Z, \Dbr_W \Dbr_XY ) - \eta(Z, \Dbr_W \Dbr_YX )+ \eta(Y,\Dbr_W\Dbr_Z X) \notag\\
	& + \eta(\Dbr_WY,\Dbr_ZX) - \eta(\Dbr_ZY,\Dbr_WX)- \eta(Y,\Dbr_{[\Dbr_WZ-\Dbr_ZW]} X) \, .
\end{align}
where we have used $\eta(Z, \Dbr_W \eta(Y,\Db X)) = \eta(\Dbr_WY,\Dbr_Z X) + \eta(Y,\Dbr_W\Dbr_Z X) - \eta(Y,\Dbr_{\Dbr_WZ} X)$.
Putting everything together and canceling terms, we arrive at the full expression for the Jacobiator $J^\Dbr$
\begin{align}
J^\Dbr(X,Y,Z,W) &=  \eta([\Dbr_X,\Dbr_Y]Z,W) - \eta([\Dbr_X,\Dbr_Z]Y,W) + \eta([\Dbr_Y,\Dbr_Z]X,W) \notag\\
	& + \eta([\Dbr_X,\Dbr_W]Y,Z) - \eta([\Dbr_Y,\Dbr_W]X,Z) + \eta([\Dbr_Z,\Dbr_W]X,Y) \notag\\
	& - \eta(Z,\Dbr_{[\Dbr_X W - \Dbr_W X]}Y) + \eta(Z,\Dbr_{[\Dbr_Y W - \Dbr_W Y]}X) \notag\\
	& - \eta(W,\Dbr_{[\Dbr_Y Z - \Dbr_Z Y]}X) + \eta(W,\Dbr_{[\Dbr_X Z - \Dbr_Z X]}Y) \notag\\
	& + \eta(Y,\Dbr_{[\Dbr_WZ-\Dbr_ZW]} X) - \eta(W,\Dbr_{[\Dbr_XY-\Dbr_YX]}Z) \notag\\
	& - \eta(\eta(Z,\Db Y), \eta(W,\Db X) ) + \eta(\eta(Z,\Db X), \eta(W,\Db Y) ) \notag\\	
	& - \eta(\eta(W, \Db Z), \eta(Y,\Db X)) 	  \, .
\end{align}
In particular we see that all the non-tensorial terms of the form $\eta(\Dbr_XZ,\Dbr_WY)$ cancel out. We can now use the definition of the projected Riemann tensor $R_\pr$, the projected torsion $\tau_\pr$ and the twist of the projected derivative $\theta_\Dbr$ from Definition \ref{def:projectedobjects} 
\begin{equation}
\begin{aligned}
\hat{R}_\pr(X,Y,Z,W)  &:= \eta([\Dbr_X,\Dbr_Y] Z - \Dbr_{[\PP(X),\PP(Y)]}Z, W), \\
\tau_\pr(X,Y) &:= \PP([\Dbr_XY-\Dbr_YX]-[\PP(X),\PP(Y)]), \\
\theta_\Dbr(X,Y) &:= \eta(X,\Db Y).
\end{aligned}
\end{equation}
to express the Jacobiator as
\begin{align}\label{eq:appendix_jacobiator}
J^\Dbr(X,Y,Z,W) &=  R_\pr(X,Y,Z,W) + R_\pr(Y,Z,X,W)  + R_\pr(Z,X,Y,W) \notag\\
	& - R_\pr(W,Z,X,Y) - R_\pr(W,X,Y,Z) - R_\pr(W,Y,Z,X) \notag\\
	& - \eta(W,\n_{\tau_\pr(X,Y)}Z) - \eta(W,\n_{\tau_\pr(Y,Z)}X)  
			- \eta(W,\n_{\tau_\pr(Z,X)}Y) \notag\\
	& - \eta(Z,\n_{\tau_\pr(X,W)}Y) - \eta(X,\n_{\tau_\pr(Y,W)}Z) 
			+ \eta(Y,\n_{\tau_\pr(W,Z)} X) \notag\\
	& + \eta(\theta_\Dbr(Z,X),\theta_\Dbr(W,Y)) - \eta(\theta_\Dbr(Z,Y),\theta_\Dbr(W,X)) \notag\\
	& - \eta(\theta_\Dbr(Y,X),\theta_\Dbr(W,Z)) \, .
\end{align}

\subsection{Anchoring and Tensorality of $J^\Dbr$}
The Jacobiator (Definition \ref{def:jacobiator}) is in general a differential operator just as the \gld $\Lb$ and so it is not tensorial. If we wish to demand that the Jacobiator vanishes, the tensoriality is of course a necessary condition. In the following, we show that the anchoring property
\begin{equation}
\PP(\Lb_XY)=[\PP(X),\PP(Y)] \label{eq:anchoring}
\end{equation}
is another necessary condition and in fact is equivalent to the tensoriality of the Jacobiator. First, we note the following property
\begin{equation}
\Lb^\Dbr_XY+\Lb^\Dbr_YX=\Dbr\eta(X,Y),\label{eq:symmetrization}
\end{equation}
which can be shown by expanding $\Lb^\Dbr_{X+Y}(X+Y)$ and using the normalisation condition $\eta( Y, \Lb^\Dbr_XX)  = \frac12 \PP(Y)[\eta(X,X)]$ along with the symmetry of $\eta$. Now, let us show that the $J^\Dbr$ vanishes only if the \eqref{eq:anchoring} is satisfied.
\begin{Lem}
Let $\Lb$ be a projected generalized Lie derivative\footnote{In the following, we will denote $\Lb^\Dbr$ simply as $\Lb$ to avoid unnecessary cluttering of the expressions.} (Definition \ref{def:genLie}). Then the anchoring property \eqref{eq:anchoring} is a necessary condition for the equation $J^\Dbr=0$ to hold.
\end{Lem}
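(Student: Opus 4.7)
The plan is to exploit the Leibniz rule of $\Lb$ and extract the failure of tensoriality of $J^\Dbr$ under function rescaling of one of its arguments. Specifically, I would compute $J^\Dbr(X,Y,fZ,W)$ for an arbitrary $f\in C^\infty(\PS)$, compare it to $f J^\Dbr(X,Y,Z,W)$, and observe that the deviation is controlled precisely by the anchoring defect $[\PP(X),\PP(Y)]-\PP(\Lb_X Y)$. If $J^\Dbr$ vanishes identically, this deviation must vanish, forcing the anchoring property.

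First I would use the Leibniz rule $\Lb_X(fZ) = f\Lb_X Z + \PP(X)[f]\,Z$ to compute
\begin{equation*}
\Lb_X\Lb_Y(fZ) = f\Lb_X\Lb_Y Z + \PP(X)[f]\,\Lb_Y Z + \PP(Y)[f]\,\Lb_X Z + \PP(X)[\PP(Y)[f]]\,Z,
\end{equation*}
and the symmetric expression with $X\leftrightarrow Y$. Subtracting, the mixed first-derivative terms cancel and one obtains
\begin{equation*}
[\Lb_X,\Lb_Y](fZ) = f[\Lb_X,\Lb_Y]Z + [\PP(X),\PP(Y)][f]\,Z.
\end{equation*}
A direct application of the Leibniz rule also gives $\Lb_{\Lb_X Y}(fZ) = f\Lb_{\Lb_X Y}Z + \PP(\Lb_X Y)[f]\,Z$.

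Pairing the difference with $W$ using $\eta$ then yields the key identity
\begin{equation*}
J^\Dbr(X,Y,fZ,W) = f\,J^\Dbr(X,Y,Z,W) + \bigl([\PP(X),\PP(Y)] - \PP(\Lb_X Y)\bigr)[f]\cdot \eta(Z,W).
\end{equation*}
Assuming $J^\Dbr\equiv 0$, the left-hand side and the first term on the right both vanish, so the remaining term must vanish for every choice of $X,Y,Z,W$ and every $f$. Since $\eta$ is non-degenerate, at any point we can choose $Z,W$ with $\eta(Z,W)\neq 0$, and since $f$ is arbitrary (so $\mathrm{d}f$ may take any cotangent value), the vector field $[\PP(X),\PP(Y)] - \PP(\Lb_X Y)$ must vanish identically. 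This is the anchoring property.

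The argument is essentially bookkeeping, so there is no substantive obstacle; the only subtlety is being careful that one really uses the full freedom in $Z,W,f$ at a point. Conceptually, the lemma says that the obstruction to tensoriality of $J^\Dbr$ in its third slot is exactly the anchoring defect, so requiring $J^\Dbr=0$ automatically enforces anchoring. A parallel computation in the $W$-slot (using the compatibility of $\Lb$ with $\eta$) would give the same conclusion and confirm that no further conditions are extracted from the other slots.
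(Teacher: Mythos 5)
Your proof is correct, but it takes a different route from the paper's. The paper proves this lemma by first symmetrising: using $\Lb_XY+\Lb_YX=\Db\,\eta(X,Y)$ it extracts from $J^\Dbr=0$ the identity $\Lb_{\Db f}X=0$ (writing an arbitrary $f$ locally as $\eta(Y,Z)$), deduces $\Lb_X\Db f=\Db(\PP(X)[f])$, and then feeds this into the $\eta$-compatibility of $\Lb$ to read off the anchoring property. You instead test $J^\Dbr$ against a rescaling $Z\mapsto fZ$ of the third slot, where only the simple Leibniz rule $\Lb_V(fZ)=f\Lb_VZ+\PP(V)[f]\,Z$ enters, and obtain
\begin{equation*}
J^\Dbr(X,Y,fZ,W)-f\,J^\Dbr(X,Y,Z,W)=\bigl([\PP(X),\PP(Y)]-\PP(\Lb_XY)\bigr)[f]\;\eta(Z,W),
\end{equation*}
so that $J^\Dbr\equiv 0$ together with non-degeneracy of $\eta$ (choose $Z,W$ with $\eta(Z,W)\neq 0$ at each point) forces the anchoring defect to annihilate every $f$, hence to vanish. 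This is essentially the mechanism of the paper's \emph{second} appendix lemma (tensoriality of $J^\Dbr$ is equivalent to anchoring), but specialised to the third slot, where the computation is far lighter than the paper's $Y$-slot calculation, and it avoids both the symmetrisation identity and the claim that any function can be written as $\eta(Y,Z)$. What you lose relative to the paper's route are the intermediate identities $\Lb_{\Db f}X=0$ and $\Lb_X\Db f=\Db(\PP(X)[f])$, which have independent interest as Courant-algebroid-type axioms; what you gain is a shorter, more self-contained argument that makes transparent that the anchoring defect is precisely the obstruction to tensoriality of the Jacobiator.
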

\begin{proof}
Assume $J^\Dbr=0$. Rewriting this using \eqref{eq:symmetrization} and the definition of the Courant bracket, we get
\begin{equation}
[\Lb_X,\Lb_Y]Z=\Lb_{ \bl X,Y\br^\Dbr}Z+\frac{1}{2}\Lb_{\Dbr \eta(X,Y)}Z=0.
\end{equation}
The symmetric part of this equation is
$\Lb_{\Dbr \eta(X,Y)}=0$, which consequently implies $\Lb_{\Dbr f} X=0$ for all $X$ and any function $f$, because we can always write any $f\in C^\infty(M)$ as $f=\eta(Y,Z)$ for some vector fields $Y$ and $Z$. Using this and again \eqref{eq:symmetrization}, we get
\begin{equation}
\Lb_X\Dbr f+\Lb_{\Dbr f}X=\Dbr\eta(X,\Dbr f)=\Dbr(\PP(X)[f])\Rightarrow \Lb_X\Dbr f=\Dbr(\PP(X)[f]).
\end{equation}
Finally, the statement of the lemma follows from rewriting the metric compatibility of $\Lb$ and using the above derived properties:
\begin{align}
\begin{aligned}
\PP(X)\eta(Y,\Dbr f)&=\eta(\Lb_XY,\Dbr f)+\eta(Y,\Lb_X\Dbr f) \Leftrightarrow\\
\PP(X)[\PP(Y)[f]]&=\PP(\Lb_XY)[f]+\eta(Y,\Dbr(\PP(X)[f])) \Leftrightarrow\\
[\PP(X),\PP(Y)][f]&=\PP(\Lb_XY)[f],
\end{aligned}
\end{align}
since $\eta(Y,\Dbr(\PP(X)[f]))=\PP(Y)[\PP(X)[f]]$. This finishes the proof.
\end{proof}
We will now show that the anchoring property is equivalent to the Jacobiator being tensorial:
\begin{Lem}
Let $\Lb$ be a projected generalized Lie derivative (Definition \ref{def:genLie}). Then the Jacobiator $J^\Dbr$ is tensorial if and only if the anchoring property \eqref{eq:anchoring} holds.
\end{Lem}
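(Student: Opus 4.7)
The main idea is to compute how the Jacobiator responds when one of its arguments is scaled by a smooth function, thereby isolating the precise obstruction to tensoriality. I begin with the $Z$-slot since it is the cleanest. Applying the Leibniz rule $\Lb^\Dbr_X(fY)=f\,\Lb^\Dbr_X Y+\PP(X)[f]\,Y$ twice yields
\begin{equation}
[\Lb^\Dbr_X,\Lb^\Dbr_Y](fZ)=f\,[\Lb^\Dbr_X,\Lb^\Dbr_Y]Z+[\PP(X),\PP(Y)][f]\,Z,
\end{equation}
and a single application gives $\Lb^\Dbr_{\Lb^\Dbr_X Y}(fZ)=f\,\Lb^\Dbr_{\Lb^\Dbr_X Y}Z+\PP(\Lb^\Dbr_X Y)[f]\,Z$. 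Subtracting these and contracting with $W$ through $\eta$ produces the master identity
\begin{equation}\label{eq:Ztens}
J^\Dbr(X,Y,fZ,W)-f\,J^\Dbr(X,Y,Z,W)=\bigl([\PP(X),\PP(Y)]-\PP(\Lb^\Dbr_X Y)\bigr)[f]\,\eta(Z,W).
\end{equation}

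The ``only if'' direction is read off immediately from \eqref{eq:Ztens}: if $J^\Dbr$ is tensorial in $Z$, the right-hand side must vanish for all $f,X,Y,Z,W$; choosing $Z,W$ with $\eta(Z,W)\neq 0$ (possible by non-degeneracy of $\eta$) and letting $f$ range over $C^\infty(\PS)$ forces the vector field $\PP(\Lb^\Dbr_X Y)-[\PP(X),\PP(Y)]$ to annihilate every function, hence to vanish identically, which is exactly the anchoring property \eqref{eq:anchoring}.

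For the converse, assume \eqref{eq:anchoring}. Identity \eqref{eq:Ztens} immediately supplies tensoriality in $Z$, while tensoriality in $W$ is automatic from the $\eta$-contraction in the definition of $J^\Dbr$. It remains to prove tensoriality in the $X$-slot (the $Y$-slot will follow by an identical argument). To this end I expand $J^\Dbr(fX,Y,Z,W)$ using
\begin{equation}
\Lb^\Dbr_{fX}Y=f\,\Lb^\Dbr_X Y-\PP(Y)[f]\,X+\eta(X,Y)\,\Db f,
\end{equation}
which follows from the definition of $\Lb^\Dbr$ together with the Leibniz behaviour $\theta_\Dbr(Y,fX)=\eta(X,Y)\,\Db f+f\,\theta_\Dbr(Y,X)$. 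The correction terms to $f\,J^\Dbr(X,Y,Z,W)$ that arise organise naturally into three blocks: those proportional to $X$, those proportional to $\Db f$, and those involving the objects $\Lb^\Dbr_Y\Db f$ and $\Lb^\Dbr_{\Db f}Z$. The $X$-block assembles precisely into $\bigl([\PP(Y),\PP(Z)]-\PP(\Lb^\Dbr_Y Z)\bigr)[f]\,X$ and vanishes by the anchor assumption. The $\Db f$-block cancels after using the normalisation identity $\Lb^\Dbr_X Y+\Lb^\Dbr_Y X=\Db\eta(X,Y)$ together with the $\eta$-compatibility of $\Lb^\Dbr$. Finally, using anchoring one verifies the two auxiliary identities $\Lb^\Dbr_Y\Db f=\Db(\PP(Y)[f])$ and $\Lb^\Dbr_{\Db f}Z=0$; the former is obtained from the compatibility of $\Lb^\Dbr$ with $\eta$ applied to $\Db f$, and the latter then follows by combining it with the symmetrisation identity and the defining relation $\eta(\Db f,Z)=\PP(Z)[f]$. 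These kill the last block and complete the argument. The main obstacle is not conceptual but purely one of bookkeeping; the strategy is to sort the many terms by algebraic type and check vanishing in each block separately, with the anchor property appearing exactly where needed.
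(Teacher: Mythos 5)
Your argument is correct and follows essentially the same route as the paper: scaling each argument by a function, expanding with the Leibniz-type identities for $\Lb^\Dbr$, isolating the anchoring defect $\PP(\Lb^\Dbr_XY)-[\PP(X),\PP(Y)]$, and using the auxiliary identities $\Lb^\Dbr_X\Db f=\Db(\PP(X)[f])$ and $\Lb^\Dbr_{\Db f}Z=0$, which (as you note) follow from anchoring via $\eta$-compatibility and the symmetrisation identity. The only cosmetic difference is that you work out the $Z$- and $X$-slots explicitly where the paper does the $Y$-slot, which changes nothing of substance.
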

\begin{proof}
First, we note the ``Leibniz-like'' properties of $\Lb$
\begin{align}
\Lb_X(fY)&=f\Lb_XY+\PP(X)[f]Y\\
\Lb_{fX}Y&=f\Lb_XY-\PP(Y)[f]X+\eta(X,Y)\Dbr f.
\end{align}
Using this, we expand
\begin{align}\label{eq:tensoriality1}
\begin{aligned}
\Lb_X\Lb_{fY} Z &= \Lb_X(f\Lb_YZ-\PP(Z)[f]Y + \eta(Y,Z)\Dbr f) \\
	&= f\Lb_X\Lb_YZ + \PP(X)[f]\Lb_YZ - \PP(Z)[f]\Lb_XY \\
	&- \PP(X)\PP(Z)[f]Y + \eta(Y,Z)\Lb_X\Dbr f + \PP(X)\eta(Y,Z)\Dbr f.
\end{aligned}
\end{align}
Similarly, we find out
\begin{align}\label{eq:tensoriality2}
\begin{aligned}
\Lb_{fY}\Lb_{X} Z&=f\Lb_Y\Lb_X Z-\PP(\Lb_XZ)[f] Y+\eta(Y,\Lb_XZ) \Dbr f\\
\Lb_{\Lb_X(fY)}Z&=f\Lb_{\Lb_XY}Z-\PP(Z)[f]\Lb_XY+\eta(\Lb_XY,Z)\Dbr f\\
&+\PP(X)[f]\Lb_YZ-\PP(Z)\PP(X)[f]Y+\eta(Y,Z)\Dbr(\PP(X)[f]).
\end{aligned}
\end{align}
By combining \eqref{eq:tensoriality1} and \eqref{eq:tensoriality2}, and using the metric compatibility of $\Lb$, we get
\begin{align}
([\Lb_X,\Lb_{fY}]-\Lb_{\Lb_X(fY)})Z &= f([\Lb_X,\Lb_{Y}]-\Lb_{\Lb_XY})Z-([\PP(X),\PP(Z)]
	-\PP(\Lb_XZ))[f] \notag\\
&+ \eta(Y,Z)(\Lb_X\Dbr f-\Dbr(\PP(X)[f])).
\end{align}
Therefore, the tensoriality of $J^\Dbr(X,Y,Z,W)$ in the $Y$-entry is ensured if and only if
\begin{align}
[\PP(X),\PP(Z)]-\PP(\Lb_XZ)&=0,\\
\Lb_X\Dbr f-\Dbr[\PP(X)[f]]&=0.
\end{align}
The first equation is exactly the anchoring property \eqref{eq:anchoring} and contracting the second equation with $\eta(W,\cdot)$ and using metric compatibility of $\Lb$ reveals that it is equivalent to the anchoring property as well
\begin{align}
\eta(W,\Lb_X\Dbr f-\Dbr(\PP(X)[f]))&= \PP(X)\eta(W,\Dbr f)-\eta(\Lb_XW,\Dbr f)-\PP(W)\PP(X)[f]\notag\\
&=([\PP(X),\PP(W)]-\PP(\Lb_XW))[f]=0.
\end{align}
This shows that $J^\Dbr$ is tensorial in the $Y$-entry if and only if \eqref{eq:anchoring} holds. The tensoriality in the $X$ and $Z$ entries is shown by analogous calculation while the tensoriality in the $W$ entry follows directly from the definition of $J^\Dbr$.
\end{proof}

We have shown above that the tensoriality is equivalent to \eqref{eq:anchoring}. Moreover, the anchoring itself has some strong implications. First, it implies the that $L=\im \PP$ is an integrable distribution, which follows directly from the Frobenius integrability condition, $[\PP(X),\PP(Y)]\subset L$. Second, it implies that the projected torsion $\tau_\pr(X,Y)$
vanishes. This can be seen from the definition of the projected \gld and the fact that the twist $\theta_\Dbr$ is valued in $\Lt=\Ker \PP$
\begin{align}
\tau_\pr(X,Y) &= \PP([\Dbr_XY-\Dbr_YX]-[\PP(X),\PP(Y)]) \notag\\
&= \PP(\Dbr_XY-\Dbr_YX+\theta_\Dbr(Y,X)-[\PP(X),\PP(Y)]) \\
&= \PP(\PP(\Lb_XY)-[\PP(X),\PP(Y)]) \, . \notag
\end{align}
Therefore if we require tensoriality of the Jacobiator and the isotropy of $\Lt$
the expression\eqref{eq:appendix_jacobiator} simplifies to
\begin{align}
\begin{aligned}
J^\Dbr(X,Y,Z,W) &=  R_\pr(X,Y,Z,W) + R_\pr(Y,Z,X,W)  + R_\pr(Z,X,Y,W)\\
	& - R_\pr(W,Z,X,Y) - R_\pr(W,X,Y,Z) - R_\pr(W,Y,Z,X).
	\end{aligned}
\end{align}

\bibliographystyle{JHEP} 
\bibliography{mybib}

\end{document}